\newcommand\be{\begin{equation}}
\newcommand\ee{\end{equation}}
\newcommand\ber{\begin{eqnarray}}
\newcommand\eer{\end{eqnarray}}
\newcommand\berr{\begin{eqnarray*}}
\newcommand\eerr{\end{eqnarray*}}
\newcommand\bea{\begin{eqnarray}}
\newcommand\eea{\end{eqnarray}}
\newcommand\ba{\begin{array}}
\newcommand\ea{\end{array}}
\newcommand\bfR{\mathbb{R}}
\newcommand\dd{\mathrm{d}}
\newcommand\lm{\lambda}
\newcommand\ii{\mbox{i}}
\newcommand\re{\mathrm{e}}
\newcommand\pa{\partial}
\newcommand{\ud}{\mathrm{d}}
\newcommand{\nn}{\nonumber}
\newcommand{\itr}{\int_{\mathbb{R}^2}}
\newcommand{\vep}{\varepsilon}
\newcommand{\bi}{\begin{itemize} }
  \newcommand{\ei}{\end{itemize} }
\newcommand\hA{\hat{A}}\newcommand\tA{\tilde{A}}\newcommand\hF{\hat{F}}\newcommand\tF{\tilde{F}}
\newcommand\ov{\overline}
\newtheoremstyle{mythm}{1.5ex plus 1ex minus .2ex}{1.5ex plus 1ex
minus .2ex}{\kai}{\parindent}{\song\bfseries}{}{1em}{}
\numberwithin{equation}{section}\numberwithin{figure}{section}
\newtheorem{theorem}{Theorem}[section]
\begin{document}
\title{Coexisting Vortices and Antivortices \\Generated by Dually Gauged Harmonic Maps}
\author{Xiaosen Han\\Institute of Contemporary Mathematics\\School of Mathematics\\Henan University\\
Kaifeng, Henan 475004, P. R.  China\\ \\
Genggeng Huang\\School of Mathematical Sciences\\
 Fudan University\\ Shanghai 200433, P. R. China\\ \\
Yisong Yang\\
Courant Institute of Mathematical Sciences\\ New York University\\New York, New York 10012, USA}

\date{}
\maketitle

\begin{abstract}
In this paper we first formulate a dually gauged harmonic map model, suggested from a product Abelian Higgs field theory
arising in impurity-inspired field theories, and obtain a new BPS system of equations governing coexisting vortices and antivortices, which are
topologically characterized by the first Chern class of the underlying Hermitian bundle and the Thom class of the associated dual
bundle. We then establish existence and uniqueness theorems for such vortices. For the equations over a compact surface, we
obtain necessary and sufficient conditions for the existence of solutions. For the equations over the full plane, we obtain
all finite-energy solutions. Besides, we also present precise expressions giving the values of various physical quantities of the solutions, including magnetic charges and energies, in terms of the total numbers of vortices and antivortices,
of two species, and the coupling parameters involved.
\end{abstract}

{\bf Key words.}  Gauge field theory, harmonic maps, magnetic vortices, mixed states, impurities, topological invariants
\medskip

{\bf PACS numbers.} 02.30.Jr, 02.30.Xx, 11.15.-q, 74.25.Ha
\medskip

{\bf MSC numbers.} 35J50, 53C43, 81T13

\section{Introduction}

Vortices in quantum field theory were first conceptualized in the pioneering work of Abrikosov \cite{A} in his prediction of the onset of type-II superconductivity
characterized by the appearance of mixed states, due to the celebrated Meissner effect, in the context of the Ginzburg--Landau theory \cite{GL,T}.
In such a formalism {electromagnetism} is classically governed by a massive Maxwell equation, also known as the London equation \cite{L,T}, but the order parameter
is quantum-mechanically governed by a nonlinear gauged Schr\"{o}dinger  equation, giving rise to a quantum current density to sustain electromagnetism.
Since then vortices have  been realized and recognized broadly in applications and theoretical investigations in areas ranging over
condensed-matter physics, elementary particle physics, and cosmology. Naturally, the richness of applications of vortices has prompted considerable extensions
of the theory beyond the minimally coupled Ginzburg--Landau theory, which are characterized by the presence of multiple scalar and gauge fields introduced to fulfill various theoretical
and phenomenological purposes.
Notably, in the relativistically extended Ginzburg--Landau theory known as the Abelian Higgs theory, vortices appear as
the Nielsen--Olesen strings \cite{NO} which serve to
 mediate in a type-II superconductor the interaction between a monopole
and an anti-monopole resulting in a constant attractive force between the
pair which would confine the monopoles
 \cite{Man1,Man2,Nambu,tH,tH1}. This idea motivated Seiberg and Witten
 \cite{SW} to arrive at a similar mechanism aimed to resolve the quark confinement puzzle. In such an extended setting, numerous
 supersymmetric gauge field theory models are used and the classical Meissner effect is supersymmetrically expanded
so that the Nielsen--Olesen magnetic strings, as well as magnetically charged monopoles, assume the forms of correspondingly revised, colored, counterparts \cite{Auzzi,EF,EFN,EI,GJK,HT,HT2004,MY,ShY2004,ShY-vortex}, to realize a linear confinement picture  \cite{Eto-survey,Gr,Kon-survey,Shifman-survey,ShY2,ShY,Tong}. In these studies
the full vortex, or the complete vortex-monopole complex \cite{C,Ta}, equations are too difficult to analyze. Instead, people have relied on exploring the
underlying, much reduced BPS structure (after the earlier works of Bogomol'nyi \cite{B} and Prasad and
Sommerfeld \cite{PS} on the Yang--Mills--Higgs monopoles and dyons) for the vortex equations.
Besides, in \cite{W}, Witten considered a two-Higgs extended Abelian Higgs
model which serves to generate  cosmic strings as seeds for matter accretion for the galaxy formation in the early universe \cite{K1,K2,V,VS}; in \cite{Babaev},
Babaev studied a two-flavor Ginzburg--Landau theory aimed at modeling two-gap superconductivity which gives rise to fractionally magnetized vortices absent
in conventional single-gap situations; in \cite{D,IS}, some two-Higgs particle extensions of the Abelian Higgs theory are used to describe double-layer
fractional quantum Hall effect in terms of the Chern--Simons kinetics. These and other applications have led to some active research on vortices generated in
quantum-field theory models accommodating extended gauge and matter field dynamics.
Motivated by these studies, in the present work, we consider coexisting vortices and antivortices, carrying opposite magnetic charges,
arising in a field theory containing two Higgs scalar fields generated from two gauged harmonic maps. This problem owes its origin from several subjects of
distinguished interest
and significance in field theories: Firstly, it originates from the classical integrable sigma model studied by Belavin and Polyakov \cite{BP} where the configuration map
is the spin vector describing the magnetic orientation in a ferromagnet which is mathematically the simplest harmonic map of a nontrivial topological characterization
\cite{ES}. Secondly, its gauge-theoretical content was initially explored by Schroers \cite{Sch1,Sch2} to host electromagnetism, whose elegant BPS structure enabled
 an Abelian Higgs theory \cite{Y1,Y2},  in which vortices and antivortices of opposite magnetic charges coexist, to be developed.
Thirdly, and more recently,
 electric and magnetic impurities are considered in the Abelian Higgs model in the context of
supersymmetric field theories and the usual BPS structure is shown to be preserved in the presence of such impurities \cite{HKT}.
In particular, in \cite{TW}, Tong and Wong proposed that magnetic impurities may be viewed as heavy, frozen vortices sitting in an additional
Abelian gauge group, so that the interaction of the Abelian Higgs vortices with impurities may be described in the
framework of a product
Abelian gauge field theory with two scalar fields, which was  later shown \cite{HY2} to enjoy a general product Abelian gauge-field-theory formalism,
allowing an extension to include the Chern--Simons dynamics as well.  Inspired by these studies, we shall develop in the present work a product Abelian gauge field theory which accommodates four species of oppositely charged and multiply distributed BPS vortices induced from two Higgs fields. Specifically a solution would possess two species
of positively charged vortices of the vortex numbers $N_1, N_2$ and two species of negatively charged vortices of the vortex numbers $P_1,P_2$, respectively.
We will present existence and uniqueness theorems for such vortex solutions under necessary and sufficient conditions given explicitly in terms of $N_1, N_2, P_1,P_2$, and other physical parameters in our field-theoretical framework. The rich properties of these solution configurations may be useful in offering broader vortex
phenomenologies in quantum field theories, in view of \cite{Babaev,D,Eto-survey,Gr,IS,K1,K2,Kon-survey,Shifman-survey,ShY2,ShY,Tong,V,VS,W,Y1,Y2}, for example, and elsewhere, and stimulate further exploration.

The rest of the paper is organized as follows. In the next section, we first review the bare \cite{BP,R,RS} and gauged
\cite{Sch1,Sch2,SSY,Y1,Y2} harmonic models  for the purpose of illustrating how vortices of opposite vortex charges arise and what new
properties are to be expected. We then present a dually gauged harmonic map model hosting two interacting harmonic maps and gauge
fields along the line of a product Abelian Higgs theory accommodating impurities \cite{HKT,TW}. We will derive a new
BPS system of equations and demonstrate how two species of vortices and antivortices arise. We will also show how this system of
equations reduces into the system that arises in the product Abelian Higgs theory \cite{HY} recently uncovered to extend the
formalism in \cite{TW}. In Section 3, we state our existence and uniqueness theorems for solutions of the BPS systems of equations over a compact surface and on the full plane. Our mathematical analysis is based on calculus of variations and elliptic {\em a priori} estimates. Specifically, in Section 4, we prove the theorem in the compact-surface situation.
 Technically,  the governing functional assumes a logarithmic form which makes the underlying analytic structure more difficult from
those already investigated in the literature. In our situation, fortunately, we encounter two logarithmic terms which may be seen
to compensate each other in such a way that, jointly, they give rise to a linear lower bound, thus enabling a resolution to the logarithmic
difficulty.
 In Section 5,
we establish the theorem in the full-plane situation.
In this situation, it is difficult to get the coerciveness of the
associated functional straightforwardly in the  usual Sobolev space $H^1(\mathbb{R}^2)$, due to the logarithmic nonlinear terms again, coupled with the issue associated with loss of compactness. To overcome this difficulty, we
use some estimates which involve a combination of the $L^2$-norm  and  the $L^1$-norm of the
minimizing sequence in two different domains,
so as to achieve a desired energy control of the sequence. In our approach,  we show how the difficulty may be resolved by
our domain splitting method and the use of a special form of
the Gagliardo--Nirenberg interpolation inequality so that a minimizing sequence is eventually shown to be
bounded in $H^1(\bfR^2)$. Thus a minimizer may be obtained as a weak limit of the sequence.   These technical novelties lead us to a complete understanding of
the problem of existence and uniqueness of solutions realizing two species of prescribed vortices and antivortices in the proposed dually coupled
harmonic map model. In Section 6, we derive sharp exponential decay estimates and quantized integrals for a planar solution. In Section 7, we briefly summarize our work and make some comments.

\section{Vortex equations induced from gauged harmonic maps}

In this section, we aim to derive a dually gauged harmonic map field theory which allows the coexistence of vortices and
antivortices. In order to {motivate}  the derivation, we begin by a discussion of the gauged harmonic map model, especially its
{origins} from the classical harmonic map model. We then derive the dually gauged theory and show how vortices and antivortices
arise. We end the section to comment on a natural link of the solutions to harmonic maps.

\subsection{Classical and gauged harmonic map models}

Recall that in the classical (static)
$O(3)$ sigma model describing
a planar ferromagnet
the field configuration is a spin vector $\phi=(\phi_1,\phi_2,\phi_3)$
which maps $\bfR^2$ into the unit sphere,
$S^2$, in $\bfR^3$, namely, $\phi_1^2+\phi_2^2+\phi_3^2=1$.
The energy then reads \cite{BP,R,RS}
\be\label{j2.1}
E(\phi)=\frac12\int_{\bfR^2}\left\{ (\pa_1\phi)^2+(\pa_2\phi)^2\right\}\,\dd x.
\ee
Due to the finite-energy requirement, we may assume $\phi$ approaches a fixed vector at infinity. Thus, for $\phi$, we may
compactify $\bfR^2$ so that
$\phi$ belongs to a second homotopy class on $S^2$ characterized by
its corresponding Brouwer\index{Brouwer degree} degree, $\deg(\phi)$,
which may be represented by the following normalized area integral,
\be\label{j2.2}
\deg(\phi)=\frac1{4\pi}\int_{\bfR^2}\phi\cdot(\pa_1\phi\times\pa_2\phi)\,\dd x.
\ee
In view of (\ref{j2.2}),
it is seen \cite{BP,R,RS} that there holds the topological energy bound
\be\label{j2.3}
E(\phi)\geq 4\pi|\deg(\phi)|,
\ee
and that, in each $\deg(\phi)=N$ class, solutions saturating the energy lower
 bound (\ref{j2.3}) could be
constructed explicitly via meromorphic functions \cite{BP,R}.
On the other hand, in the gauged $O(3)$ sigma model proposed in the work of Schroers \cite{Sch1},
the energy (\ref{j2.1}) is extended to take
the form
\be\label{j2.5}
E(\phi,A)=\frac12\int_{\bfR^2}\left\{ F_{12}^2+(D_1\phi)^2+(D_2\phi)^2+(1-{\bf n}\cdot\phi)^2\right\}\,\dd x,
\ee
where $A_i$ ($i=1,2$) is a vector field, $F_{12}=\pa_1 A_2-\pa_2 A_1$  the induced magnetic
curvature,  $D_i\phi=\pa_i\phi-A_i({\bf n}\times\phi), i=1,2,$ are covariant derivatives, and ${\bf n}=(0,0,1)$ is the north pole on $S^2$, and
the degree formula (\ref{j2.2}) may be modified to assume the form
\be\label{j2.6}
\deg(\phi)=\frac1{4\pi}\int_{\bfR^2}\left\{\phi\cdot(D_1\phi\times D_2\phi)-F_{12}(1-{\bf n}\cdot\phi)\right\}\,\dd x,
\ee
so that the same  energy
 bound (\ref{j2.3}) may be established. Here we emphasize that
this construction relies on the structure of the energy (\ref{j2.5}) which specifies a fixed groundstate, $\phi={\bf n}$, at the
infinity of $\bfR^2$, enabling its compactification into $S^2$ as before, thus the validity of the degree formula (\ref{j2.6}). Of course this
form of the energy
breaks the original $O(3)$ symmetry in the bare energy (\ref{j2.1}). To exploit the broken symmetry and explore the electromagnetism induced
from the gauge field, in \cite{Sch2}, Schroers revisited the gauged sigma model and formulated the theory into a general setting,
whose energy  essentially  assumes the form
\be\label{jEE}
E(\phi,A)=\frac12\int_{\bfR^2}\left\{ F_{12}^2+(D_1\phi)^2+(D_2\phi)^2+ (\sigma-{\bf n}\cdot\phi)^2\right\}\,\dd x,
\ee
where $\sigma\in[0,1]$ determines the angle between the north pole ${\bf n}$ and $\phi$ at infinity in $S^2$. In
particular, when $\sigma<1$, the set of groundstates, or vacua, becomes a  circle  manifold defined by the equations
\be
\phi_1^2+\phi^2_2=1-\sigma^2 ,\quad\phi_3=\sigma,
\ee
so that the theory possesses
 a spontaneously broken symmetry which leads to the appearance of
vortices and
antivortices, as in the Ginzburg--Landau theory \cite{A,GL}. As a consequence,
  a new Abelian Higgs theory naturally arises. In fact, without loss of generality, take $\sigma=0$ and consider a complex scalar
field $u=u_1+\ii u_2$ induced from the map $\phi$ so that
\be\label{j2.8}
u_1=\frac{\phi_1}{1+\phi_3},\quad u_2=\frac{\phi_2}{1+\phi_3}.
\ee
That is,
 we project $S^2$ onto the complex plane through the south pole $-{\bf n}$, which
corresponds to infinity of $u$.
Thus, with the induced gauge-covariant
derivatives $D_i u=\pa_i u-\ii A_i u$ ($i=1,2$),
the normalized energy density given in (\ref{jEE}) becomes
\be\label{j2.9}
{\cal H}=\frac12\,F_{12}^2+\frac2{(1+|u|^2)^2}\sum_{i=1}^2
(D_i u)
\overline{(D_i u)}
+\frac12\bigg(\frac{1-|u|^2}{1+|u|^2}\bigg)^2.
\ee

There are two interesting and relevant facts worthy noticing.

(i) Like that in the classical Yang--Mills--Higgs theory,
the potential density function for the complex scalar field $u$ also has
a Mexican-hat profile.
In particular, when we take the $|u|^2\to1$ limit in the denominators of the second
and third terms in (\ref{j2.9}), we see that the model approaches that of the classical Abelian
Higgs theory,
\be\label{j2.10}
{\cal H}=\frac12\,F_{12}^2+\frac12\sum_{i=1}^2
(D_i u)
\overline{(D_i u)}
+\frac18({1-|u|^2})^2,
\ee
for which an existence and uniqueness theorem for multiply distributed vortices was established over $\bfR^2$ in \cite{JT,T1,T2} and over a compact surface $S$
in \cite{Br,N1,N2,WY} where it is shown that the total vortex number $N$ needs to satisfy the bound
\be\label{jB}
N<\frac{|S|}{4\pi},
\ee
in which $|S|$ being the surface area of $S$, to ensure existence of a BPS solution. The bound given in (\ref{jB}) is sometimes referred to as the Bradlow bound \cite{Adam,MN,Nasir}.

(ii) The preimages of the north pole under the original spin vector $\phi$ become the zeros
of the complex field $u$ and those of the south pole the poles of $u$.
Moreover, the energy density (\ref{j2.9}) is invariant under the transformation
\be\label{j2.11}
(u, A_\mu)\mapsto \bigg(\frac1u,-A_\mu\bigg)
\ee
in addition to its $U(1)$ gauge invariance. This important feature indicates that the poles
and zeros will play equal roles. Specifically, the magnetic field $F_{12}$ may be shown to be governed in the BPS limit of the equations of motion of (\ref{j2.9}) by the formula
\cite{Sch1,Y1,Y2}:
\be\label{jF}
F_{12}=\frac{1-|u|^2}{1+|u|^2},
\ee
so that the zeros and poles of $u$ give rise to vortices with $F_{12}=1$ and antivortices with $F_{12}=-1$, respectively, and that the total energy reads
\be\label{jE}
E=\int{\cal H}\,\dd x=2\pi (N+P),
\ee
where $N,P$ are the numbers (counting algebraic multiplicities) of zeros and poles of $u$, which are also the total vortex and antivortex numbers of the system.
Furthermore, the Bradlow bound (\ref{jB}) is now replaced with the updated bound \cite{SSY}
\be\label{jNP}
|N-P|<\frac{|S|}{2\pi},
\ee
which implies that the total vortex number, $N+P$, and thus the energy as well, as given in (\ref{jE}), may be arbitrarily high, so far as the discrepancy of the two types of the vortices, measured by the quantity $|N-P|$, remains under control by (\ref{jNP}).

\subsection{Dually gauged harmonic map model with two interacting configuration maps}

We are now prepared to consider a dually gauged harmonic map (or sigma) model, with two interacting configuration maps, $\phi,\psi$, with images
in $S^2$, and two Abelian gauge fields,
$\hat{A}_i,\tilde{A}_i$, in the static situation (for simplicity). As before, use $\hat{F}_{12}$ and $\tilde{F}_{12}$ to denote the magnetic fields induced from
$\hat{A}_i$ and $\tilde{A}_i$, respectively. Let
\be
D_i\phi=\pa_i\phi-(a \hat{A}_i+b\tilde{A}_i)({\bf n}\times\phi),\quad D_i\psi=\pa_i\psi-(c\hat{A}_i+d\tilde{A}_i)({\bf n}\times \psi),
\ee
be the covariant derivatives associated with the charge parameters $a,b,c,d$. It can be seen that there holds the identity
\be
\phi\cdot(\pa_1\phi\times\pa_2\phi)=\phi\cdot(D_1\phi\times D_2\phi)
+(a\hat{F}_{12}+b\tilde{F}_{12})({\bf n}\cdot\phi-1)-a\hat{H}_{12}-b\tilde{H}_{12},
\ee
where $\hat{H}_{12}=\pa_1\hat{H}_2-\pa_2\hat{H}_1$ (the same for $\tilde{H}_{12}$) with
\be
\hat{H}_i=\hat{A}_i({\bf n}\cdot\phi-1),\quad \tilde{H}_i=\tilde{A}_i({\bf n}\cdot\phi-1).
\ee
A similar expression for $\psi\cdot(\pa_1\psi\times\pa_2\psi)$ also holds. Hence we have the degree formula
\be
\deg(\phi)=\frac1{4\pi}\int_{\bfR^2}\left\{\phi\cdot(D_1\phi\times D_2\phi)+(a\hat{F}_{12}+b\tilde{F}_{12})({\bf n}\cdot\phi-1)\right\}\,\dd x,
\ee
provided that $\phi={\bf n}$ at infinity of $\bfR^2$.
A similar expression holds for $\deg(\psi)$. Hence, combining these facts with the methods in \cite{B,PS,Sch1}, we arrive at the
BPS energy density
\bea\label{jH0}
{\cal H}&=&\frac12\hat{F}^2_{12}+\frac12\tilde{F}_{12}^2+\sum_{i=1}^2\left(|D_i\phi|^2+|D_i\psi|^2\right)\nn\\
&&+2\left(a[{\bf n}\cdot\phi-1]+c[{\bf n}\cdot\psi-1]\right)^2+2\left(b[{\bf n}\cdot\phi-1]+d[{\bf n}\cdot\psi-1]\right)^2,
\eea
which leads to the following extended potential density function in view of (\ref{jEE}):
\be
V(\phi,\psi)=2\left(a[{\bf n}\cdot\phi-1]+c[{\bf n}\cdot\psi-1]+\xi\right)^2+2\left(b[{\bf n}\cdot\phi-1]+d[{\bf n}\cdot\psi-1]+\gamma\right)^2,
\ee
where $\xi,\gamma$ are coupling parameters. Thus, corresponding to the symmetric vacuum state case,
\be
\phi=\psi={\bf n},
\ee
at infinity, we have $\xi=\gamma=0$, and to the spontaneously broken {symmetry} case,
\be
{\bf n}\cdot\phi={\bf n}\cdot\psi=0,
\ee
at infinity, we have
\be
\xi=a+c,\quad\gamma=b+d,
\ee
similar to the case $\sigma=0$ in (\ref{jEE}). Therefore, in this latter case,
we can write down the normalized energy density governing the scalar fields $\phi,\psi$ coupled with the gauge fields $\hat{A}_i,
\tilde{A}_i$ as
\bea\label{jH}
{\cal H}&=&\frac12\hat{F}^2_{12}+\frac12\tilde{F}_{12}^2+\sum_{i=1}^2\left(|D_i\phi|^2+|D_i\psi|^2\right)\nn\\
&&+2\left(a{\bf n}\cdot\phi+c{\bf n}\cdot\psi\right)^2+2\left(b{\bf n}\cdot\phi+d{\bf n}\cdot\psi\right)^2.
\eea

In order to illustrate the topological structure of the model more transparently, we now represent the maps $\phi,\psi$ by a pair of complex scalar fields $q,p$,
via formulas like (\ref{j2.8}) leading to the relation
\be
\phi=\left(\frac{2\Re{(q)}}{1+|q|^2},\frac{2\Im{(q)}}{1+|q|^2},\frac{1-|q|^2}{1+|q|^2}\right),
\ee
between $\phi$ and $q$, and similarly for $\psi$ and $p$,
realized as two cross sections over a Hermitian line bundle $L$ over a Riemann surface $S$, either compact or non-compact, and the gauge fields $\hat{A}_i,\tilde{A}_i$ as
 two connection 1-forms $\hA,\tA$ which induce the magnetic fields as curvature 2-forms $\hF=\dd \hA,\tF=\dd\tA$, with the connections
\be
Dq=\dd q-\ii (a\hA+b\tA)q,\quad Dp=\dd p-\ii (c\hA+d\tA)p,
\ee
operating on $q,p$, respectively, where $a,b,c,d$ are seen to be real coupling parameters, whose roles are to mix the interaction of $q,p,\hA,\tA$.
Thus, with this notation, the energy density (\ref{jH}) becomes
\bea\label{2}
{\cal H}&=&\frac12*(\hF\wedge *\hF)+\frac12*(\tF\wedge *\tF)\nn\\
&&+\frac4{(1+|q|^2)^2}*(Dq\wedge * \overline{Dq})+\frac4{(1+|p|^2)^2}*(Dp\wedge * \overline{Dp})\nn\\
&&+2\left(a\left[\frac{1-|q|^2}{1+|q|^2}\right]+c \left[\frac{1-|p|^2}{1+|p|^2}\right]\right)^2+2\left(b\left[\frac{1-|q|^2}{1+|q|^2}\right]+d \left[\frac{1-|p|^2}{1+|p|^2}\right]\right)^2,
\eea
where $*$ is the Hodge dual. So it follows that the Euler--Lagrange equations of (\ref{2}) are
\bea
\frac12\, D*\left(\frac{Dq}{(1+|q|^2)^2}\right)&=&-\frac{*(Dq\wedge *\overline{Dq})}{(1+|q|^2)^3}q\nn\\
&&-\frac{q}{(1+|q|^2)^2}\left([a^2+b^2]\left[\frac{1-|q|^2}{1+|q|^2}\right]
+[ac+bd]\left[\frac{1-|p|^2}{1+|p|^2}\right]\right),\nn\\
&&\label{2.19}\\
\frac12\, D*\left(\frac{Dp}{(1+|p|^2)^2}\right)&=&-\frac{*(Dp\wedge *\overline{Dp})}{(1+|p|^2)^3}p\nn\\
&&-\frac{p}{(1+|p|^2)^2}\left([ac+bd]\left[\frac{1-|q|^2}{1+|q|^2}\right]
+[c^2+d^2]\left[\frac{1-|p|^2}{1+|p|^2}\right]\right),\nn\\
&&\label{2.20}\\
\frac14\,\dd *\hat{F}&=&\frac{a\ii (\overline{q}Dq-q\overline{Dq})}{(1+|q|^2)^2}+\frac{c\ii (\overline{p}Dp-p\overline{Dp})}{(1+|p|^2)^2},\label{2.21}\\
\frac14\,\dd *\tilde{F}&=&\frac{b\ii (\overline{q}Dq-q\overline{Dq})}{(1+|q|^2)^2}+\frac{d\ii (\overline{p}Dp-p\overline{Dp})}{(1+|p|^2)^2},\label{2.22}
\eea
which are rather complicated.
It is interesting to note that, in the limit $|q|^2,|p|^2\to 1$, (\ref{2}) becomes
\bea\label{3}
{\cal H}&=&\frac12*(\hF\wedge *\hF)+\frac12*(\tF\wedge *\tF)+
*(Dq\wedge * \overline{Dq})+*(Dp\wedge * \overline{Dp})\nn\\
&&+\frac12\left(a\left[{1-|q|^2}\right]+c \left[{1-|p|^2}\right]\right)^2+\frac12\left(b\left[{1-|q|^2}\right]+d \left[{1-|p|^2}\right]\right)^2,
\eea
which has been studied in \cite{HY2} so that the Abelian Higgs theory with impurity studied in \cite{TW} corresponds to the choice $a=1,b=-1,c=0,d=1$.
For (\ref{3}), the Euler--Lagrange equations, or the generalized two-gap Ginzburg--Landau equations in our context, are
\bea
D*Dq&=&-\left([a^2+b^2][1-|q|^2]+[ac+bd][1-|p|^2]\right)q,\label{2.24}\\
 D*Dp&=&-\left([ac+bd][1-|q|^2]+[c^2+d^2][1-|p|^2]\right)p,\label{2.25}\\
\dd *\hat{F}&=&\ii a(\overline{q} Dq-q\overline{Dq})+\ii c(\overline{p} Dp-p\overline{Dp}),\label{2.26}\\
\dd *\tilde{F}&=&\ii b(\overline{q} Dq-q\overline{Dq})+\ii d(\overline{p} Dp-p\overline{Dp}).\label{2.27}
\eea
Setting $|q|^2=1, |p|^2=1$ in the denominators in (\ref{2}), (\ref{2.19})--(\ref{2.22}), we can recover (\ref{3}), (\ref{2.24})--(\ref{2.27}), respectively.

We now pursue a BPS reduction to the full governing equations (\ref{2.19})--(\ref{2.22}). To this goal, recall the identities
\bea
Dq\wedge *\overline{Dq}+*Dq\wedge \ov{Dq}&=&(Dq\pm\ii * Dq)\wedge *\ov{(Dq\pm\ii * Dq)}\nn\\
&&\pm\ii (Dq\wedge\ov{Dq}-*Dq\wedge *\ov{Dq}),
\eea
\be
|Dq|^2=*(Dq\wedge*\ov{Dq}),
\ee
etc. Besides, for the current densities
\bea
J(q)&=&\frac{\ii}{1+|q|^2}\left(q\ov{Dq}-\ov{q}Dq\right),\\
J(p)&=&\frac{\ii}{1+|p|^2}\left(p\ov{Dp}-\ov{p}Dp\right),
\eea
we have
\bea
K(q)&\equiv& \dd J(q)\nn\\
&=&-\frac{2|q|^2}{1+|q|^2} (a\hF+b\tF)+\frac{\ii}{(1+|q|^2)^2}\left(Dq\wedge \ov{Dq}-*Dq\wedge *\ov{Dq}\right),\label{8}\\
K(p)&\equiv& \dd J(p)\nn\\
&=&-\frac{2|p|^2}{1+|p|^2} (c\hF+d\tF)+\frac{\ii}{(1+|p|^2)^2}\left(Dp\wedge \ov{Dp}-*Dp\wedge *\ov{Dp}\right).\label{9}
\eea
Hence, we obtain the energy decomposition
\bea\label{10}
{\cal H}&=&\frac12\left|\hF \mp 2* \left(a\left[\frac{1-|q|^2}{1+|q|^2}\right]+c \left[\frac{1-|p|^2}{1+|p|^2}\right]\right)\right|^2\nn\\
&&+\frac12\left|\tF\mp  2*\left(b\left[\frac{1-|q|^2}{1+|q|^2}\right]+d \left[\frac{1-|p|^2}{1+|p|^2}\right]\right)\right|^2\nn\\
&&\pm *2 \hF\left(a\left[\frac{1-|q|^2}{1+|q|^2}\right]+c \left[\frac{1-|p|^2}{1+|p|^2}\right]\right)
\pm *2\tF \left(b\left[\frac{1-|q|^2}{1+|q|^2}\right]+d \left[\frac{1-|p|^2}{1+|p|^2}\right]\right)\nn\\
&&+\frac2{(1+|q|^2)^2}\left(|Dq\pm\ii * Dq|^2\pm\ii *(Dq\wedge \ov{Dq}-*Dq\wedge *\ov{Dq})\right)\nn\\
&&+\frac2{(1+|p|^2)^2}\left(|Dp\pm\ii * Dp|^2\pm\ii *(Dp\wedge \ov{Dp}-*Dp\wedge *\ov{Dp})\right)\nn\\
&=&\frac12\left|\hF \mp 2* \left(a\left[\frac{1-|q|^2}{1+|q|^2}\right]+c \left[\frac{1-|p|^2}{1+|p|^2}\right]\right)\right|^2\nn\\
&&+\frac12\left|\tF\mp  2*\left(b\left[\frac{1-|q|^2}{1+|q|^2}\right]+d \left[\frac{1-|p|^2}{1+|p|^2}\right]\right)\right|^2\nn\\
&&+\frac2{(1+|q|^2)^2}|Dq\pm\ii * Dq|^2+\frac2{(1+|p|^2)^2}|Dp\pm\ii * Dp|^2\nn\\
&&\pm *2(a\hF+b\tF)\pm*2\left(-\frac{2|q|^2}{1+|q|^2}(a\hF+b\tF)\right)\nn\\
&&\pm *2(c\hF+d\tF)\pm*2\left(-\frac{2|p|^2}{1+|p|^2}(c\hF+d\tF)\right)\nn\\
&&\pm\frac{2\ii}{(1+|q|^2)^2} *\left(Dq\wedge \ov{Dq}-*Dq\wedge *\ov{Dq}\right)\nn\\
&&\pm\frac{2\ii}{(1+|p|^2)^2} *\left(Dp\wedge \ov{Dp}-*Dp\wedge *\ov{Dp}\right).
\eea

Applying (\ref{8}), (\ref{9}) in (\ref{10}), we arrive at the neat expression
\bea\label{11}
{\cal H}&=&\frac12\left|\hF \mp 2* \left(a\left[\frac{1-|q|^2}{1+|q|^2}\right]+c \left[\frac{1-|p|^2}{1+|p|^2}\right]\right)\right|^2\nn\\
&&+\frac12\left|\tF\mp  2*\left(b\left[\frac{1-|q|^2}{1+|q|^2}\right]+d \left[\frac{1-|p|^2}{1+|p|^2}\right]\right)\right|^2\nn\\
&&+\frac2{(1+|q|^2)^2}|Dq\pm\ii * Dq|^2+\frac2{(1+|p|^2)^2}|Dp\pm\ii * Dp|^2\nn\\
&&\pm 2*\left((a+c) \hF+(b+d)\tF+K(q)+K(p)\right).
\eea

Observe that the quantities
\be
\frac1{2\pi}\int_S \hF, \quad\frac1{2\pi}\int_S \tF, \label{2.46}
\ee
are related to the first Chern classes represented by the curvature 2-forms $\hF,\tF$, respectively, and
\be
\frac1{4\pi}\int_S K(q),\quad\frac1{4\pi}\int_S K(p), \label{2.47}
\ee
the Thom classes of the dual bundle of $L$, represented by the mixed gauge connections $a\hA+b\tA,c\hA+d\tA$, respectively. For some detailed computation
and characterization of these topological invariants, see \cite{SSY}. Thus
\be
\tau\equiv 2\left((a+c) \hF+(b+d)\tF+K(q)+K(p)\right) \label{2.48}
\ee
is a topological density which yields via (\ref{11}) the topological lower bound
\be
E=\int_S{\cal E}*1\geq\left|\int_S \tau\right|, \label{2.49}
\ee
which is attained when $(q,p,\hA,\tA)$ satisfies the BPS equations
\bea
Dq\pm\ii *Dq&=&0,\label{16}\\
Dp\pm\ii *Dp&=&0,\label{17}\\
\hF&=&\pm 2*\left(a\left[\frac{1-|q|^2}{1+|q|^2}\right]+c \left[\frac{1-|p|^2}{1+|p|^2}\right]\right),\label{18}\\
\tF&=&\pm 2*\left(b\left[\frac{1-|q|^2}{1+|q|^2}\right]+d \left[\frac{1-|p|^2}{1+|p|^2}\right]\right).\label{19}
\eea
Here and in the sequel we observe the convention that  we  choose either the upper or lower sign in all equations simultaneously.
It is straightforward to examine that (\ref{16})--(\ref{19}) imply (\ref{2.19})--(\ref{2.22}). Therefore, we have arrived at a significant reduction of the complicated
system of equations (\ref{2.19})--(\ref{2.22}) into the system of equations (\ref{16})--(\ref{19}), along the spirit of Bogomol'nyi \cite{B} and Prasad--Sommerfield
\cite{PS}, which will be the focus of our study to follow.

We next explain how vortices and antivortices arise in the system of the BPS equations  (\ref{16})--(\ref{19}). First, the equations (\ref{16})--(\ref{17}) indicate
that $q,p$ are meromorphic so that their zeros and poles are isolated and of integer multiplicities. Thus, counting multiplicities, we may let the sets of zeros and poles of $q,p$ be denoted by
\bea
&& {\cal Z}(q)=\{z'_{1,1},\dots,z'_{1,N_1}\},\quad {\cal P}(q)=\{z''_{1,1},\dots,z''_{1,P_1}\},\label{20}\\
&&{\cal Z}(p)=\{z'_{2,1},\dots,z'_{2,N_2}\},\quad {\cal P}(p)=\{z''_{2,1},\dots,z''_{2,P_2}\},\label{21}
\eea
respectively. That is, algebraically, $q,p$ have $N_1,N_2$ zeros and $P_1,P_2$ poles, respectively, as indicated. Then introduce
\be
B_1=d*\hat{F}-c*\tilde{F},\quad B_2=a*\tilde{F}-b*\hat{F},
\ee
as two induced magnetic fields. Consequently, in view of (\ref{18})--(\ref{19}), we obtain
\bea
B_1&=&\pm2(ad-bc)\left(\frac{1-|q|^2}{1+|q|^2}\right),\label{2.57}\\
B_2&=&\pm2(ad-bc)\left(\frac{1-|p|^2}{1+|p|^2}\right),\label{2.58}
\eea
so that vortices and antivortices are exhibited and presented by the zeros and poles of $q,p$ clearly, where $B_1,B_2$ attain their global maximum and minimum, $\pm 2(ad-bc)$, depending on the choice of signs, respectively.

Later, we will obtain the quantities
\bea
\frac1{2\pi}\int_S \hF&=&\frac{(d[N_1-P_1]-b[N_2-P_2])}{ad-bc}, \\
\frac1{2\pi}\int_S \tF&=&\frac{(a[N_2-P_2]-c[N_1-P_1])}{ad-bc},
\eea
which give rise to the associated first Chern classes
\bea
\frac1{2\pi}\int_S{ F}(q)&=&N_1-P_1,\\
\frac1{2\pi}\int_S{ F}(p)&=&N_2-P_2,
\eea
where ${ F}(q)=\dd A(q)$ and ${ F}(p)=\dd A(p)$ are the curvature 2-forms induced from the connection 1-forms
\be
 { A}(q)=a\hat{A}+b\tilde{A},\quad { A}(p)=c\hat{A}+d\tilde{A},
\ee
which take account of the the differences of the numbers of zeros and poles, of the sections $q,p$, respectively, and solely.
As another consequence, we are led to the following quantized values of the magnetic charges or fluxes
\bea
\int_SB_1\,\dd \Omega_g &=&\frac{2\pi\big([c^2+d^2][N_1-P_1]-[ac+bd][N_2-P_2]\big)}{ad-bc},\label{2.64}\\
\int_SB_2\,\dd\Omega_g&=&\frac{2\pi\big([a^2+b^2][N_2-P_2]-[ac+bd][N_1-P_1]\big)}{ad-bc},\label{2.65}
\eea
which depend on the full spectrum of the numbers of the zeros and poles of both $q$ and $p$.
\medskip

As an illustration, in  Figure \ref{F1}, we present a plot of the strength of one of the two identified magnetic fields which peaks and valleys at the centers of one of the two species of vortices and antivortices
represented by the zeros and poles of one of the two Higgs scalar fields descending from two coupled and gauged harmonic maps, respectively.

\begin{figure}[h]
\begin{center}
\includegraphics[height=6cm,width=9cm]{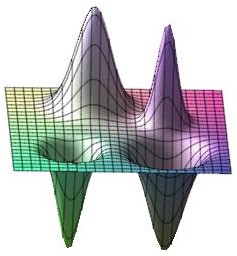}
\caption{A computer-generated plot of the planar distribution of one of the two  magnetic fields induced from one of the two pairs of vortices and antivortices of varied local vortex charges.
It is seen that the magnetic field is locally concentrated and decays fast away from the vortex cores. }
\label{F1}
\end{center}
\end{figure}

\subsection{Notes on the analytic properties of solutions}

Here we comment briefly on the analytic properties of the solution maps so constructed.

First, recall that the energy (\ref{j2.1}) is Dirichlet such that, when the range of $\phi$ is the full space $\bfR^3$
rather than $S^2$, a critical point of the energy is harmonic, $\Delta\phi=0$. However, since the range of $\phi$ is confined to
$S^2$, a critical point of (\ref{j2.1}) satisfies, instead, the nonlinear equation
\be\label{j2.59}
\Delta\phi-(\phi\cdot[\Delta\phi])\phi=0,
\ee
whose solutions are much more complicated. (Although the ``true harmonic" equation $\Delta\phi=0$ automatically implies (\ref{j2.59}), the finite-energy condition indicates
that all solutions to $\Delta\phi=0$ are trivial, i.e., $\phi=$ constant.)
 Nevertheless, the work of Belavin and Polyakov \cite{BP} establishes the fact that
solutions of (\ref{j2.59}) are all given through the stereographic projection (\ref{j2.8}) by the solutions of the equation
\be\label{j2.60}
\pa_1 u\pm\ii\pa_2u=0.
\ee
In other words, solutions of (\ref{j2.59}), away from isolated poles, are all represented by holomorphic or anti-holomorphic functions. Thus, the term ``harmonic" is well justified.

Next, in the gauged harmonic map model governed by the energy (\ref{j2.9}) discussed in \S 2.1, the BPS equations consist of the vortex
equation (\ref{jF}) and a ``holomorphic equation" which reads \cite{Sch1,Sch2}
\be\label{j2.61}
D_1 u\pm\ii D_2 u=0,
\ee
thereby replacing the conventional derivatives in  (\ref{j2.60}) by gauge-covariant derivatives.  The equation (\ref{j2.61}) implies that,
away from isolated poles, $u$ is indeed holomorphic or anti-holomorphic up to a smooth multiple \cite{JT}. In other words, we arrive at a similar ``harmonic representation" of the
field configurations through the ``gauged holomorphic equation" (\ref{j2.61}).

Finally, in the context of the dually gauged theory studied in \S 2.2, the single gauged holomorphic equation (\ref{j2.61}) is now replaced by
a pair of gauged holomorphic equations, (\ref{16}) and (\ref{17}), which govern $p,q$, a pair of gauged holomorphic
or anti-holomorphic sections.

Thus we have seen that in gauged models the configuration maps all lie in the category of harmonic maps with well-exhibited
analytic features.

\medskip

In the subsequent sections, we will construct solutions of the gauged harmonic map \cite{Sch1,Sch2,SSY,Y1,Y2} and impurity \cite{TW}, as well as \cite{HY}, inspired
BPS equations (\ref{16})--(\ref{19}) realizing a prescribed distribution of vortices and antivortices, represented as zeros and poles of $q,p$ in (\ref{20}) and (\ref{21}), respectively.

\section{Existence and uniqueness theorems for coexisting vortices and antivortices}

In this section, we state our existence and uniqueness theorems for vortices and antivortices over a compact surface and on $\bfR^2$. We then present
the system of nonlinear elliptic partial differential equations descending from the BPS equations (\ref{16})--(\ref{19}) which govern such vortices and antivortices.

We first consider the equations over a compact Riemann surface  $S$ equipped with a Riemannian metric $g$ and use $\dd\Omega_g$ to denote the
associated canonical surface element.

\begin{theorem}\label{th1}
 Consider the BPS equations \eqref{16}--\eqref{19} of the energy density  \eqref{2}
formulated over a complex Hermitian line bundle $L$ over a compact Riemann surface $S$ with canonical total area
$|S|$ governing two connection 1-forms $\hat{A},\tilde{A}$ and two cross sections $q,p$.  For the prescribed sets of zeros and poles for the fields $q$ and $p$
given respectively in \eqref{20} and \eqref{21}, these coupled equations  admit a solution with
these sets of zeros and poles, if and only if
\ber
&& \max\left\{\big|(c^2+d^2)(N_1-P_1)-(ac+bd)(N_2-P_2)\big|,\right.\nn\\
&&\quad\quad\quad\left.\big|(a^2+b^2)(N_2-P_2)-(ac+bd)(N_1-P_1)\big|\right\}
 <(ad-bc)^2\,\frac{|S|}{\pi}. \label{e.5}
\eer
 Moreover, modulo gauge transformations, such a solution { is unique and} carries  quantized magnetic charges (\ref{2.64}) and (\ref{2.65}), and minimum
energy of the form
\be\label{3.2}
E= \int_S{\cal H}\,{\rm\dd}\Omega_g=4\pi (N_1+N_2+P_1+P_2),
\ee
which is seen to be stratified topologically by the Chern and Thom classes of the line bundle $L$ and its dual respectively.
In particular, in terms of energy, zeros (vortices) and poles (antivortices) of $q$ and $p$ contribute equally.
\end{theorem}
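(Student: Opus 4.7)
The plan is to reduce the BPS system \eqref{16}--\eqref{19} to a coupled scalar semilinear elliptic system and then solve it by direct variational methods. Away from the prescribed sets \eqref{20}--\eqref{21}, set $u_1=\ln|q|^2$ and $u_2=\ln|p|^2$. The gauged holomorphic equations \eqref{16}--\eqref{17} force $q,p$ to be meromorphic in local trivializations, and combining with \eqref{18}--\eqref{19} and the identities $F(q)=a\hat F+b\tilde F$, $F(p)=c\hat F+d\tilde F$ yields
\begin{equation*}
\Delta u_k=4\sum_{\ell=1}^{2}M_{k\ell}\,\tanh\!\Bigl(\tfrac{u_\ell}{2}\Bigr)+4\pi\sum_{j=1}^{N_k}\delta_{z'_{k,j}}-4\pi\sum_{j=1}^{P_k}\delta_{z''_{k,j}},\qquad k=1,2,
\end{equation*}
with $M=\bigl(\begin{smallmatrix}a^2+b^2&ac+bd\\ac+bd&c^2+d^2\end{smallmatrix}\bigr)$; since $\det M=(ad-bc)^2>0$, $M$ is symmetric positive definite.

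I then remove the distributional sources by introducing fixed background functions $g_k$ solving $\Delta g_k=-4\pi(N_k-P_k)/|S|+4\pi\sum_j\delta_{z'_{k,j}}-4\pi\sum_j\delta_{z''_{k,j}}$ and writing $u_k=g_k+v_k$, yielding a smooth semilinear system for $(v_1,v_2)\in H^1(S)\times H^1(S)$. Integrating this system over $S$ and using $|\tanh(\cdot)|<1$ together with the explicit entries of $M^{-1}$ (proportional to $c^2+d^2$, $-(ac+bd)$, $a^2+b^2$, all divided by $(ad-bc)^2$) produces exactly the pair of inequalities \eqref{e.5} as necessary conditions. The reduced system is the Euler--Lagrange equation of the functional
\begin{equation*}
J(v_1,v_2)=\tfrac12\!\int_S\!(M^{-1}\nabla v,\nabla v)\,\dd\Omega_g+8\!\int_S\!\sum_{k=1}^{2}\ln\cosh\!\Bigl(\tfrac{g_k+v_k}{2}\Bigr)\dd\Omega_g+\!\int_S\!(M^{-1}h)\cdot v\,\dd\Omega_g,
\end{equation*}
where $h_k=4\pi(N_k-P_k)/|S|$.

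The principal obstacle is coercivity of $J$ on $H^1(S)\times H^1(S)$. Decomposing $v_k=\bar v_k+w_k$ with $\int_S w_k=0$, positive definiteness of $M^{-1}$ combined with the Poincar\'e inequality controls $\sum_k\|w_k\|_{H^1}$ by the quadratic term. Each $\ln\cosh$ contribution alone provides only the one-sided linear bound $\ln\cosh(t)\geq|t|-\ln 2$, which is the ``logarithmic difficulty'' flagged in the introduction, since $\ln\cosh$ lacks the exponential growth of the Abelian--Higgs nonlinearity. However, summing the two $\ln\cosh$ terms yields a joint linear lower bound $4(|\bar v_1|+|\bar v_2|)|S|$ (up to additive constants and a $\|w\|_{L^1}$ error), and pairing this with the linear topological term $\int(M^{-1}h)\cdot v=4\pi\sum_k(M^{-1}(N-P))_k\bar v_k$ gives coercivity precisely when $|(M^{-1}(N-P))_k|<|S|/\pi$ for $k=1,2$, which unpacks to \eqref{e.5}; this is the two-term compensation mechanism alluded to in the introduction. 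Direct method and weak lower semicontinuity then produce a minimizer; elliptic regularity promotes it to a smooth classical solution; strict convexity of $J$ (from $M^{-1}>0$ and the strict convexity of $\ln\cosh$) delivers uniqueness up to gauge.

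To conclude the theorem, the pair $(u_1,u_2)$ reconstructs $(q,p,\hat A,\tilde A)$ uniquely modulo gauge through \eqref{16}--\eqref{19}; the magnetic charge formulas \eqref{2.64}--\eqref{2.65} follow by integrating \eqref{2.57}--\eqref{2.58} and using the integral identities for $\tanh(u_k/2)$ derived from the necessary-condition step; and the energy identity \eqref{3.2} falls out of the BPS decomposition \eqref{11}: on a BPS solution all squared terms vanish, leaving $E=\bigl|\int_S\tau\bigr|$ with $\tau$ given by \eqref{2.48}, which by the Chern class computation of \eqref{2.46} and the Thom class computation of \eqref{2.47} (as outlined in \S 2.2) evaluates to $4\pi(N_1+N_2+P_1+P_2)$, showing that each zero and each pole of $q$ and $p$ contributes $4\pi$ to the total energy.
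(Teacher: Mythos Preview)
Your proposal is correct and follows essentially the same route as the paper: reduce \eqref{16}--\eqref{19} to the coupled $\tanh$-type elliptic system, subtract background sources, and minimize a strictly convex functional whose nonlinear terms obey the linear lower bound $\ln\cosh t\ge|t|-\ln2$ (the paper writes the equivalent form $\ln(1+\re^{t})+\ln(1+\re^{-t})\ge|t|$), yielding coercivity precisely under \eqref{e.5} and hence existence by the direct method, with uniqueness from strict convexity. One small clarification: the ``two compensating logarithmic terms'' alluded to in the introduction are the pair $\ln(1+\re^{u_{0,k}+v_k})$ and $\ln(1+\re^{-(u_{0,k}+v_k)})$ for each fixed $k$ (whose sum already equals $2\ln\cosh\bigl(\tfrac{u_{0,k}+v_k}{2}\bigr)+2\ln2$), not the $k=1$ and $k=2$ species---so the bound you invoke is in fact two-sided on its own, and no cross-species compensation is needed; this does not affect your argument.
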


 We next consider the equations \eqref{16}--\eqref{19}  over $\mathbb{R}^2$. We have the following theorem.

\begin{theorem}\label{th2}
 For the BPS equations \eqref{16}--\eqref{19} in $\hat{A},\tilde{A}$ and $q,p$ over $\mathbb{R}^2$ with the
  prescribed sets of zeros and poles for the fields $q$ and $p$
given respectively in \eqref{20} and \eqref{21}, modulo gauge transformations, there is a unique finite-energy solution realizing
these sets of zeros and poles, which represent vortices and antivortices of opposite magnetic charges.  Moreover, such a solution enjoys the sharp exponential decay estimates
\be
1-|q|^2, 1-|p|^2, |Dq|^2,|Dp|^2,|\hF|,|\tF|=\mathrm{O}(\re^{-\sigma(1-\vep)|x|}),\quad \mbox{for $|x|$ large},
\ee
where $\vep>0$ is arbitrarily small and
\be
\sigma^2=8\left(a^2+b^2+c^2+d^2-\sqrt{(a^2+b^2+c^2+d^2)^2-4(ad-bc)^2}\right),
\ee
and carries  quantized magnetic charges (\ref{2.64}) and (\ref{2.65}), now evaluated over $\bfR^2$, and minimum
energy of the form
\be\label{xE}
E=\int_{\bfR^2}{\cal H}\,\dd x= 4\pi (N_1+N_2+P_1+P_2),
\ee
given again in terms of the total numbers of vortices (zeros of $q,p$) and antivortices (poles of $q,p$) only.
\end{theorem}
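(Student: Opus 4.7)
The plan is to reduce the BPS system \eqref{16}--\eqref{19} to a pair of quasilinear elliptic equations for the real unknowns $u_1=\ln|q|^2$ and $u_2=\ln|p|^2$, and to obtain the unique finite-energy solution as the minimizer of a strictly convex functional on $H^1(\bfR^2)\times H^1(\bfR^2)$. Away from the prescribed divisors, combining the gauge-holomorphicity equations \eqref{16}--\eqref{17} with the curvature constraints \eqref{18}--\eqref{19} and the identity $\Delta\ln|q|^2=-2(a\hat F_{12}+b\tilde F_{12})$ (and analogously for $p$) yields
\begin{equation*}
\Delta u_i=4\sum_{j=1}^2 M_{ij}\tanh\frac{u_j}{2}+4\pi\Bigl(\sum_{k}\delta_{z'_{i,k}}-\sum_{k}\delta_{z''_{i,k}}\Bigr),\qquad i=1,2,
\end{equation*}
where $M$ is the positive-definite symmetric matrix with entries $a^2+b^2$, $c^2+d^2$, $ac+bd$, with $\det M=(ad-bc)^2$. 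The finite-energy boundary condition becomes $u_i\to 0$ at infinity. Writing $u_i=v_i+g_i^0$ with a fixed smooth background $g_i^0$ that carries the prescribed logarithmic singularities and decays at infinity converts the problem into a smooth elliptic system for $v_i$ with a rapidly decaying source.

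This reduced system is the Euler--Lagrange equation of the strictly convex functional
\begin{equation*}
J(v)=\int_{\bfR^2}\left\{\frac{1}{2}\langle M^{-1}\nabla v,\nabla v\rangle+8\sum_{i=1}^2\ln\cosh\frac{v_i+g_i^0}{2}+\langle M^{-1}h,v\rangle\right\}\dd x,
\end{equation*}
with $h_i$ smooth and rapidly decaying; positive definiteness of $M^{-1}$ and strict convexity of $\ln\cosh$ make $J$ strictly convex, which will yield the asserted uniqueness. The main technical obstacle is the coerciveness of $J$ on $H^1(\bfR^2)\times H^1(\bfR^2)$: because $\ln\cosh(s)$ behaves like $\frac{1}{2}s^2$ for small $|s|$ but only like $|s|$ for large $|s|$, the nonlinear potential fails to control $\|v\|_{L^2}$ by itself. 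I would overcome this by a domain-splitting device: on $\Omega_1=\{x:|v_i(x)+g_i^0(x)|\leq 1,\,i=1,2\}$ the quadratic behaviour of $\ln\cosh$ produces an $L^2(\Omega_1)$-bound on $v$, whereas on the complementary region $\Omega_2$ its linear growth yields only an $L^1(\Omega_2)$-bound. The two pieces are then merged via the Nash-type Gagliardo--Nirenberg inequality
\begin{equation*}
\|v\|_{L^2(\bfR^2)}^2\leq C\,\|\nabla v\|_{L^2(\bfR^2)}\,\|v\|_{L^1(\bfR^2)},
\end{equation*}
which, together with the kinetic part of $J$, upgrades the $L^1$ control on $\Omega_2$ to a full $L^2$-estimate on $\bfR^2$. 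This gives a uniform $H^1$-bound on any minimizing sequence, whose weak $H^1$-limit is a minimizer by the weak lower semicontinuity of convex integrals; elliptic regularity delivers smoothness, and strict convexity of $J$ forces uniqueness.

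From the minimizer $v$ one recovers $u_i=v_i+g_i^0$ and then reconstructs $q,p$ as gauge-holomorphic sections with the prescribed zeros and poles \eqref{20}--\eqref{21}, and $\hat A,\tilde A$ as the connections determined (up to gauge) by \eqref{18}--\eqref{19}; uniqueness of the full BPS configuration modulo gauge is inherited from that of $v$. For the exponential decay, linearizing the reduced system at the vacuum $u\equiv 0$ gives $\Delta u\approx 2Mu$, whose radial fundamental solutions decay at rates determined by the eigenvalues $\lambda_\pm$ of $M$; comparison with an exponentially decaying radial super-solution tied to the smallest eigenvalue $\lambda_-$, together with the rapid decay of $g_i^0$ and $h_i$, delivers the pointwise decay rate $\sigma$ for $1-|q|^2$, $1-|p|^2$ and the derived quantities. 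Finally, the quantized magnetic charges \eqref{2.64}--\eqref{2.65} and the energy identity \eqref{xE} follow by integrating the BPS form \eqref{11} of the energy density over $\bfR^2$, applying Stokes' theorem to the topological density $\tau$ of \eqref{2.48}, and invoking the Chern--Thom class representations \eqref{2.46}--\eqref{2.47} together with the divisor data \eqref{20}--\eqref{21}. The coerciveness step outlined above is the principal obstacle; the remaining ingredients are relatively standard once the variational reformulation is in place.
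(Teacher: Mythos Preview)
Your proposal is correct and follows essentially the same route as the paper: reduction of \eqref{16}--\eqref{19} to the coupled $\tanh$-type elliptic system for $u_i=\ln|q|^2,\ln|p|^2$, a strictly convex variational functional built from $\ln\cosh$ terms, the domain-splitting device (quadratic versus linear behaviour of $\ln\cosh$) together with the Gagliardo--Nirenberg inequality $\|f\|_{L^2}^2\le C\|\nabla f\|_{L^2}\|f\|_{L^1}$ to recover an $H^1$-bound on minimizing sequences, linearization at the vacuum plus a comparison argument for the exponential decay, and integration of the BPS density \eqref{11} with the Chern/Thom identifications for the quantized charges and energy. The one place where your sketch is slightly looser than the paper is the ``merging'' step: since you only control $\|v\|_{L^2(\Omega_1)}$ and $\|v\|_{L^1(\Omega_2)}$, you cannot feed $v$ itself into the Gagliardo--Nirenberg inequality (which needs the full $\|v\|_{L^1(\bfR^2)}$); the paper handles this by applying the inequality instead to the truncation $\tilde v=\max\{v-1,0\}$, which is supported in $\{v>1\}$ and therefore has a genuine $L^1(\bfR^2)$-bound, and then recovers the $L^2$-bound on $\{v>1\}$ from $\tilde v$.
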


The above two theorems will be obtained from  Theorems \ref{thm1} and \ref{thm2}, respectively, in the following two sections, except the energy formulas
\eqref{3.2} and \eqref{xE}. In fact, as in \cite{SSY}, we may check that the Thom classes take the values
\be
\frac1{4\pi}\int_S K(q)= P_1,\quad\frac1{4\pi}\int_S K(p)=P_2,
\ee
similarly over $\bfR^2$, which with the help of the quantized integrals in  Theorems \ref{thm1} and \ref{thm2} lead to the derivations of \eqref{3.2} and \eqref{xE},
respectively.

We now proceed to deduce the governing {nonlinear} elliptic partial differential equations of our problem.
To this end, use $\Delta$ to denote the Laplace--Beltrami operator on $S$ induced from the metric $g=(g_{ij})$:
\be
\Delta f=\frac1{\sqrt{|g|}}\pa_i (g^{ij}\sqrt{|g|}\pa_j f),\quad |g|=\det(g_{ij}).
\ee
Then, away from the zeros and poles of $q,p$, we can resolve (\ref{16}), (\ref{17}) to obtain, the relations
\ber
a*\hF+b*\tF&=&-\frac12\Delta \ln|q|^2,\\
c*\hF+d*\tF&=&-\frac12\Delta\ln|p|^2.
\eer
In view of these relations and (\ref{18})--(\ref{21}), we see that $u_1=\ln|q|^2,u_2=\ln|p|^2$ satisfy the equations
\bea
\Delta u_1&=&4(a^2+b^2)\left(\frac{\re^{u_1}-1}{\re^{u_1}+1}\right)+4(ac+bd)\left(\frac{\re^{u_2}-1}{\re^{u_2}+1}\right)\nn\\
&&+4\pi\left(\sum_{s=1}^{N_1}{\delta}_{z'_{1,s}}-\sum_{s=1}^{P_1}\delta_{z''_{1,s}}\right),\quad\quad \label{eq1b}\\
\Delta u_2&=&4(ac+bd)\left(\frac{\re^{u_1}-1}{\re^{u_1}+1}\right)+4(c^2+d^2)\left(\frac{\re^{u_2}-1}{\re^{u_2}+1}\right)\nn\\
&&+4\pi\left(\sum_{s=1}^{N_2}{\delta}_{z'_{2,s}}-\sum_{s=1}^{P_2}\delta_{z''_{2,s}}\right),\quad\quad\,\label{eq1a}
\eea
which govern two species of vortices and antivortices at the prescribed locations at $z'_{i,s'_i}$ and $z''_{i,s''_i}$  with $s'=1,\dots,N_i$ and  $s''_i=
1,\dots,P_i$, $i=1,2$, respectively.

Note again that, in the limit $\re^{u_1},\re^{u_2} \to 1$  in the denominators of the nonlinearities and absence of poles, these equations reduce to those in \cite{HY2} for the generalized Abelian Higgs equations,
inspired by the work of Tong and Wong \cite{TW}. Such a property is elegant.

\section{Vortices and antivortices on a compact surface}

In this section we  study  the equations \eqref{16}--\eqref{19} over a compact Riemann surface $(S, g)$, which have been reduced to   the
elliptic equations \eqref{eq1a}--\eqref{eq1b}. To prove Theorem \ref{th1},  we consider the following more general system of equations:
\bea
\Delta u_1&=&a_{11}\frac{\re^{u_1}-1}{\re^{u_1}+1}+a_{12}\frac{\re^{u_2}-1}{\re^{u_2}+1}+4\pi \sum_{s=1}^{N_1}{\delta}_{z'_{1,s}}-4\pi \sum_{s=1}^{P_1}\delta_{z''_{1,s}},\label{torus1a}\\
 \Delta u_2& =&a_{21}\frac{\re^{u_1}-1}{\re^{u_1}+1}+a_{22}\frac{\re^{u_2}-1}{\re^{u_2}+1}+4\pi \sum_{s=1}^{N_2}{\delta}_{z'_{2,s}}-4\pi \sum_{s=1}^{P_2}\delta_{z''_{2,s}},\label{torus1b}
\eea
 where $a_{ij}$   are constants satisfying $a_{11}>0, a_{12}a_{21}>0$,
and
\be
|a|\equiv\det(a_{ij})=a_{11}a_{22}-a_{12}a_{21}>0.
\ee
Note that, for greater generality of our study, we will not assume symmetry for the coefficient matrix $(a_{ij})$.

Let $u_{0,1},u_{0,2}$ be  solutions of
\bea
 \Delta u_{0,1}&=&4\pi \sum_{s=1}^{N_1}{\delta}_{z'_{1,s}}-4\pi \sum_{s=1}^{P_1}\delta_{z''_{1,s}}+\frac{4\pi}{|S|}(P_1-N_1),\\
 \Delta u_{0,2}&=&4\pi \sum_{s=1}^{N_2}{\delta}_{z'_{2,s}}-4\pi \sum_{s=1}^{P_2}\delta_{z''_{2,s}}+\frac{4\pi}{|S|}(P_2-N_2),
\eea
with $\int_{S}u_{0,1}\ud\Omega_g=\int_{S}u_{0,2}\ud\Omega_g=0$ (cf. \cite{Aubin}). Set $u_i=v_i+u_{0,i}$, $i=1,2$, in (\ref{torus1a})--(\ref{torus1b}). Then $(v_1,v_2)$ solves
\bea
\Delta v_1&=&a_{11}\frac{\re^{v_1+u_{0,1}}-1}{\re^{v_1+u_{0,1}}+1}+a_{12}\frac{\re^{v_2+u_{0,2}}-1}{\re^{v_2+u_{0,2}}+1}+\frac{4\pi}{|S|}(N_1-P_1),\label{torus3a}\\
\Delta v_2 &=&a_{21}\frac{\re^{v_1+u_{0,1}}-1}{\re^{v_1+u_{0,1}}+1}+a_{22}\frac{\re^{v_2+u_{0,2}}-1}{\re^{v_2+u_{0,2}}+1}+\frac{4\pi}{|S|}(N_2-P_2).\label{torus3b}
\eea
Consider the following functional over $H^1(S)\times H^1(S)$:
\bea\label{fun1}
J(v_1,v_2)&=&\frac{a_{22}}2\int_S|\nabla v_1|^2\ud\Omega_g-a_{12}\int_S\nabla v_1\cdot\nabla v_2\ud\Omega_g+\frac{a_{12}a_{11}}{2a_{21}}\int_S |\nabla v_2|^2\ud\Omega_g\nn\\
&&+ |a|\int_S\left(\ln (\re^{v_1+u_{0,1}}+1)+\ln(\re^{-(u_{0,1}+v_1)}+1)\right)\ud\Omega_g\nn\\
&&+\frac{a_{12}|a|}{a_{21}}\int_S\left(\ln (\re^{u_{0,2}+v_2}+1)+\ln(\re^{-(u_{0,2}+v_2)}+1)\right)\ud\Omega_g\nn\\
&&+\frac{4\pi}{|S|}\left(a_{22}(N_1-P_1)-a_{12}(N_2-P_2)\right)\int_S v_1\ud\Omega_g\nn\\
&&+\frac{4\pi}{|S|}\frac{a_{12}}{a_{21}}\left(a_{11}(N_2-P_2)-a_{21}(N_1-P_1)\right)\int_S v_2\ud \Omega_g.
\eea
  In this section we use the following notation
 \ber
 \int_S|\nabla v|^2\ud \Omega_g\equiv  \int_Sg^{jk}\partial_jv\partial_kv\ud \Omega_g, \quad
 \int_S\nabla v_1\cdot\nabla v_2\ud\Omega_g\equiv  \int_Sg^{jk}\partial_jv_1\partial_kv_2\ud \Omega_g.\label{nt1}
 \eer
Then it is clear that \eqref{torus3a} and \eqref{torus3b} are the Euler--Lagrange equations of functional \eqref{fun1}.
\begin{theorem}\label{thm1}
Suppose $a_{11}, a_{12}a_{21},$ and $ |a|>0$. Then the system of equations
\eqref{torus3a}--\eqref{torus3b} admits a solution $(v_1,v_2)\in H^1(S)\times H^1(S)$ if and only if
\ber
\max\Big\{\big|a_{22}(N_1-P_1)-a_{12}(N_2-P_2)\big|,\big|a_{11}(N_2-P_2)-a_{21}(N_1-P_1)\big|\Big\}
<\frac{|a||S|}{4\pi}. \label{e.5}
\eer
Moreover, if the solution exists, it is unique and there hold   the following quantized integrals
\ber
 \int_S\frac{\re^{u_1}-1}{\re^{u_1}+1}\ud\Omega_g=\frac{4\pi}{|a|}(a_{22}[P_1-N_1]-a_{12}[P_2-N_2]), \label{e.5a}\\
 \int_S\frac{\re^{u_2}-1}{\re^{u_2}+1}\ud\Omega_g=\frac{4\pi}{|a|}(a_{11}[P_2-N_2]-a_{21}[P_1-N_1]).\label{e.5b}
\eer
\end{theorem}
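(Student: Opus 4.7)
My plan is to establish necessity of \eqref{e.5} by a direct integration argument and sufficiency by applying the direct method of the calculus of variations to the functional $J$ in \eqref{fun1}, with strict convexity delivering uniqueness. The main analytic obstacle is coerciveness, which is resolved by the compensation mechanism between the two logarithmic terms hinted at in the introduction.

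For necessity, integrating \eqref{torus1a}--\eqref{torus1b} over $S$ annihilates the Laplacian terms, so with $T_i\equiv\int_S(\re^{u_i}-1)/(\re^{u_i}+1)\,\dd\Omega_g$ one obtains the $2\times 2$ linear system
\[
a_{11}T_1+a_{12}T_2=-4\pi(N_1-P_1),\qquad a_{21}T_1+a_{22}T_2=-4\pi(N_2-P_2).
\]
Since $|a|\neq 0$, Cramer's rule yields exactly \eqref{e.5a}--\eqref{e.5b}, and the pointwise bound $|(\re^{t}-1)/(\re^{t}+1)|<1$ forces $|T_i|<|S|$, which rearranges into \eqref{e.5}. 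This same calculation, applied later to the solution produced, simultaneously delivers the quantized integrals claimed in the theorem.

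For sufficiency I minimize $J$ on $X=H^1(S)\times H^1(S)$. The gradient quadratic form $Q$ appearing in \eqref{fun1} is positive definite: the hypotheses $a_{11}>0$ and $|a|>0$ force $a_{22}>0$, while the hypothesis $a_{12}a_{21}>0$ combined with $|a|>0$ makes the determinant of its $2\times 2$ coefficient matrix equal to $\tfrac{a_{12}|a|}{a_{21}}>0$, so $Q(v_1,v_2)\geq\kappa(\|\nabla v_1\|_{L^2}^2+\|\nabla v_2\|_{L^2}^2)$ for some $\kappa>0$. To achieve coerciveness, decompose $v_i=c_i+w_i$ with $\int_S w_i\,\dd\Omega_g=0$ and use the elementary pointwise inequality
\[
\ln(\re^{t}+1)+\ln(\re^{-t}+1)=2\ln\!\bigl(\re^{t/2}+\re^{-t/2}\bigr)\geq |t|,
\]
which, together with $\int_S u_{0,i}\,\dd\Omega_g=0$ and Jensen, gives
\[
\int_S\!\Bigl(\ln(\re^{v_i+u_{0,i}}+1)+\ln(\re^{-(v_i+u_{0,i})}+1)\Bigr)\dd\Omega_g\geq\int_S|v_i+u_{0,i}|\,\dd\Omega_g\geq|c_i|\,|S|.
\]
Inserting this into \eqref{fun1} and comparing the \emph{linear} $|c_i|$-growth from the logarithmic blocks (with coefficients $|a||S|$ and $\tfrac{a_{12}|a|}{a_{21}}|S|$) against the linear-in-$c_i$ terms (coefficients $4\pi(a_{22}[N_1-P_1]-a_{12}[N_2-P_2])$ and $\tfrac{4\pi a_{12}}{a_{21}}(a_{11}[N_2-P_2]-a_{21}[N_1-P_1])$), the strict inequality \eqref{e.5} makes the logarithmic terms strictly dominate, producing $J(v_1,v_2)\geq\kappa(\|\nabla v_1\|_{L^2}^2+\|\nabla v_2\|_{L^2}^2)+\delta(|c_1|+|c_2|)-C$ for some $\delta,C>0$. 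The Poincar\'e inequality applied to the mean-zero parts $w_i$ then upgrades this to control on $\|v_i\|_{H^1}$, yielding coerciveness on $X$.

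The scalar function $t\mapsto\ln(\re^t+1)+\ln(\re^{-t}+1)$ has second derivative $\tfrac12\,\mathrm{sech}^2(t/2)>0$, so both logarithmic terms are strictly convex in $(v_1,v_2)$; combined with the positive-definite quadratic form $Q$, this makes $J$ strictly convex and continuous on $X$ (finiteness and continuity of the logarithmic integrals follow from the Moser--Trudinger embedding $H^1(S)\hookrightarrow L^p(S)$ for all $p<\infty$), hence weakly lower semicontinuous. A minimizing sequence is bounded by coerciveness, and a weakly convergent subsequence has a weak limit which is a minimizer, therefore a weak solution of the Euler--Lagrange system \eqref{torus3a}--\eqref{torus3b}; standard elliptic bootstrap then yields a classical solution. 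Strict convexity forces uniqueness, and \eqref{e.5a}--\eqref{e.5b} are exactly what the first-paragraph computation delivers for this solution.
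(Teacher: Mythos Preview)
Your proof is correct and follows essentially the same route as the paper: necessity via integrating the equations and using $|(\re^t-1)/(\re^t+1)|<1$, sufficiency by direct minimization of $J$ using the key pointwise inequality $\ln(\re^t+1)+\ln(\re^{-t}+1)\geq|t|$ to obtain coerciveness under \eqref{e.5}, and uniqueness from strict convexity. The only cosmetic difference is that the paper bounds the logarithmic block below by $\int_S|v_i|\,\dd\Omega_g-\int_S|u_{0,i}|\,\dd\Omega_g$ (via the triangle inequality) rather than by $|c_i|\,|S|$, but both routes give the needed control on the mean constants $c_i$ once combined with the Poincar\'e inequality on the mean-zero parts.
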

\begin{proof}
We first prove that \eqref{e.5} is a necessary condition. If there is a solution $(v_1, v_2)$ of \eqref{torus3a}--\eqref{torus3b},  then after a direct computation we  have
\bea
 \Delta(a_{22}v_1-a_{12}v_2)&=&|a|\,\frac{\re^{u_{0,1}+v_1}-1}{\re^{u_{0,1}+v_1}+1}\nn\\
&&+\frac{4\pi}{|S|}\left(a_{22}[N_1-P_1]-a_{12}[N_2-P_2]\right),\label{torus4a}\\
\Delta (a_{11}v_2-a_{21}v_1)&=&|a|\,\frac{\re^{u_{0,2}+v_2}-1}{\re^{u_{0,2}+v_2}+1}\nn\\
&&+\frac{4\pi}{|S|}\left(a_{11}[N_2-P_2]-a_{21}[N_1-P_1]\right).\label{torus4b}
\eea

Then the quantized integrals \eqref{e.5a}--\eqref{e.5b} follow from  a direct integration of \eqref{torus4a}--\eqref{torus4b} over $S$. Noting the elementary inequality $\left|\frac{e^t-1}{e^t+1}\right|<1$ for any $t\in \mathbb{R}$, we can get the necessity of \eqref{e.5} from  \eqref{e.5a}--\eqref{e.5b}.


We now turn to the proof of sufficiency of  \eqref{e.5}.
Notice the elementary inequality
\begin{equation}\label{t}
\ln(1+\re^t)+\ln(1+\re^{-t})\ge |t|, \quad \forall t\in \mathbb{R},
\end{equation}
which implies
\be
\int_S\left(\ln(1+\re^{u_{0,i}+v_{i}})+\ln(1+\re^{-(u_{0,i}+v_{i})})\right)\ud\Omega_g
\ge\int_S |v_{i}|\ud\Omega_g-\int_S|u_{0,i}|\ud\Omega_g,\, i=1,2. \label{i7}
\ee

Then by \eqref{e.5} and \eqref{i7},  there exists  positive constants $C_0, C_1$ such that
\ber
J(v_1,v_2)&\ge&
 \frac {a_{22}}2\int_S|\nabla v_1|^2\ud\Omega_g-a_{12}\int_S\nabla v_1\cdot\nabla v_2\ud x+\frac{a_{11}a_{12}}{2a_{21}}\int_S |\nabla v_2|^2\ud\Omega_g\nn\\
 &&+|a|\int_S\big[\ln (\re^{v_1+u_{0,1}}+1)+\ln(\re^{-(u_{0,1}+v_1)}+1)\big]\ud \Omega_g\nn\\
 &&-\frac{4\pi}{|S|}\big[a_{22}(N_1-P_1)-a_{12}(N_2-P_2)\big]\int_S|v_1|\ud \Omega_g\nn\\
 &&+\frac{a_{12}|a|}{a_{21}}\int_S\big[\ln (\re^{u_{0,2}+v_2}+1)+\ln(\re^{-(u_{0,2}+v_2)}+1)\big]\ud\Omega_g\nn\\
 &&-\frac{4\pi}{|S|}\big[a_{11}(N_2-P_2)-a_{21}(N_1-P_1)\big]\int_S|v_2|\ud \Omega_g\nn\\
 &\ge&\frac {a_{22}}2\int_S|\nabla v_1|^2\ud\Omega_g-a_{12}\int_S\nabla v_1\cdot\nabla v_2\ud x+\frac{a_{11}a_{12}}{2a_{21}}\int_S |\nabla v_2|^2\ud\Omega_g\nn\\
 &&+\left(|a|-\frac{4\pi}{|S|}\big[a_{22}(N_1-P_1)-a_{12}(N_2-P_2)\big]\right)\int_S|v_1|\ud \Omega_g\nn\\
 &&+\frac{a_{12}}{a_{21}}\left(|a|-\frac{4\pi}{|S|}\big[a_{11}(N_2-P_2)-a_{21}(N_1-P_1)\big]\right)\int_S|v_2|\ud\Omega_g\nn\\
&&-|a|\int_S|u_{0,1}|\ud\Omega_g -\frac{a_{12}|a|}{a_{21}}\int_S|u_{0,2}|\ud\Omega_g\nn\\
&\ge& C_0 \int_S\left\{ |\nabla v_1|^2+|\nabla v_2|^2+|v_1|+|v_2|\right\}\ud\Omega_g-C_1,\label{4.17}
\eer
which says that  the functional $J(v_1, v_2)$ is bounded from below. Using  \eqref{4.17} and Poincar\'{e} inequality, we observe also that the functional is coercive in a well understood sense.
Then the minimization problem
\begin{equation}
\eta_0\equiv\inf\{J(v_1,v_2)\,|\,(v_1, v_2)\in H^1(S)\times H^1(S)\}
\end{equation}
is well posed.

Consider a minimizing sequence $\{(v_{1n},v_{2n})\}\subset H^1(S)\times H^1(S)$. Then we can easily see that, for some positive constant $C>0$,
\begin{equation}
\int_S\big(|\nabla v_{1n}|^2+|\nabla v_{2n}|^2\big)\ud\Omega_g\le C.
\end{equation}
Set $v_{in}=w_{in}+c_{in}$ with $\int_S w_{in}=0, c_{in}\in \mathbb{R}, i=1,2$. Then by the Moser--Trudinger
inequality \cite{Aubin,F}
\be
\int_S \re^f\,\ud\Omega_g\leq C\exp\left(\frac1{16\pi}\int_S |\nabla f|^2\,\ud\Omega_g\right),\quad f\in H^1(S),\quad\int_S f\,\ud\Omega_g=0,
\ee
where $C>0$ is a constant, and the Poincar\'{e} inequality, we have
\begin{equation}
\int_S \left(w_{1n}^2+w_{2n}^2\right)\ud\Omega_g\le C.
\end{equation}

From \eqref{4.17}  we can see that  $|c_{in}|\le C<\infty, i=1, 2$.  This implies $(v_{1n},v_{2n})$ is weakly compact in $H^1(S)\times H^1(S)$.
Then, up to a subsequence, there exists $(v_1^*, v_2^*)\in H^1(S)\times H^1(S)$ such that $(v_{1n},v_{2n})\to (v_1^*, v_2^*)$  weakly as $n\to \infty$.
Consequently, $(v_1^*, v_2^*)$ is a minimizer of the functional  $J$, which gives a weak  solution of the  system \eqref{torus3a}--\eqref{torus3b}. A standard bootstrap argument then shows that it is also a smooth solution.

For the uniqueness part, we just need to show the minimizer is unique. It is a  consequence of the fact that the functional $J(v_1, v_2)$  is strictly convex, which can be checked straightforwardly, thereby completing the proof.
\end{proof}

\section{Planar solution: proof of existence by minimization}

To prove Theorem \ref{th2}, we consider (\ref{torus1a})--(\ref{torus1b})  
over $ \mathbb R^2$
subject to the boundary condition
\be
 u_i\to 0\quad \text{as}\quad |x|\to \infty, \quad i=1, 2. \label{se2}
\ee

We have the following main existence and uniqueness theorem.

\begin{theorem}\label{thm2}
Suppose $a_{11}, a_{12}a_{21},$ and $|a|>0$. Then the system of equations \eqref{torus1a}--\eqref{torus1b}
over $\bfR^2$ admits a {unique} solution $(u_1,u_2)$ satisfying \eqref{se2}.
Moreover, the solution enjoys the following sharp decay estimates
 \ber
  u_1^2+u_2^2\le C(\vep)\re^{-\sqrt{\lambda_0}(1-\vep)|x|}, \label{se3}\\
   |\nabla u_1|^2+|\nabla u_2|^2\le C(\vep)\re^{-\sqrt{\lambda_0}(1-\vep)|x|},\label{se4}
 \eer
for $|x|$ being sufficiently large,   where $\vep\in(0, 1)$ is arbitrarily small,  $C(\vep)>0$ a corresponding constant, and $\lm_0>0$  given by the formula
\be
\lm_0\equiv\frac12\min\left\{1,\frac{a_{21}}{a_{12}}\right\}\left(a_{11}+\frac{a_{12}}{a_{21}}a_{22}-\sqrt{\Big(a_{11}+\frac{a_{12}}{a_{21}}a_{22}\Big)^2-4\frac{a_{12}}{a_{21}}|a|}\right).
\ee
Furthermore, there hold the  quantized integrals
\ber
 \itr\frac{\re^{u_1}-1}{\re^{u_1}+1}\ud x=\frac{4\pi}{|a|}(a_{22}[P_1-N_1]-a_{12}[P_2-N_2]),\label{se5} \\
 \itr\frac{\re^{u_2}-1}{\re^{u_2}+1}\ud x=\frac{4\pi}{|a|}(a_{11}[P_2-N_2]-a_{21}[P_1-N_1]).\label{se6}
\eer
\end{theorem}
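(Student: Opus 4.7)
The plan is to absorb the Dirac data into planar background potentials, recast \eqref{torus1a}--\eqref{torus1b} as the Euler--Lagrange system of a smooth variational problem on $H^1(\bfR^2)\times H^1(\bfR^2)$, minimize the functional after overcoming a coerciveness obstruction via a domain-splitting trick, and then read off the decay and quantized integrals by linearization and integration identities. Concretely, I would introduce $u_{0,i}(x) = -\sum_{s=1}^{N_i}\ln(1+\mu|x-z'_{i,s}|^{-2}) + \sum_{s=1}^{P_i}\ln(1+\mu|x-z''_{i,s}|^{-2})$ for a fixed $\mu>0$, so that each $u_{0,i}\to 0$ at infinity and $\Delta u_{0,i}$ produces the prescribed Dirac masses modulo a smooth, rapidly decaying error $g_i$. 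Writing $u_i = v_i + u_{0,i}$ reduces the system to a smooth one for $(v_1,v_2)$, which is the Euler--Lagrange system of a functional $I$ of the same shape as \eqref{fun1} but with the logarithmic integrands recentered at $v_i=0$ (so they behave like $v_i^2$ near zero and like $|v_i|$ at infinity) and with an extra linear term absorbing $g_i$. The gradient quadratic form, with matrix $\bigl(\begin{smallmatrix}a_{22} & -a_{12}\\ -a_{12} & a_{11}a_{12}/a_{21}\end{smallmatrix}\bigr)$, is positive definite under the hypotheses $a_{11},\,a_{12}a_{21},\,|a|>0$.

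The quantized integrals \eqref{se5}--\eqref{se6} arise as \emph{a posteriori} identities, not as constraints. Forming the weighted combinations $a_{22}\Delta u_1 - a_{12}\Delta u_2$ and $a_{11}\Delta u_2 - a_{21}\Delta u_1$ isolates a single tanh nonlinearity per equation, namely $|a|\tanh(u_i/2)$, together with the appropriate linear combination of Dirac masses. Integrating over $\bfR^2$ and discarding the boundary terms by the decay of $\nabla u_i$ (justified by the sharp decay below) yields \eqref{se5}--\eqref{se6} directly. Unlike the compact case, no Bradlow-type size constraint on $N_i,P_i$ emerges, consistent with the absence of a topological obstruction on the open plane.

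The principal obstacle is coerciveness of $I$ on $H^1(\bfR^2)\times H^1(\bfR^2)$. The recentered log integrands behave quadratically where $|v_i|\le M$ but only linearly where $|v_i|>M$, giving an $L^2$-estimate on the first set and merely an $L^1$-estimate on the second, which is insufficient for a global $L^2$-bound on $v_i$. To bridge the gap I would introduce the truncation $w_i = (|v_i|-M)_+\operatorname{sgn} v_i \in H^1(\bfR^2)$ and apply the 2D Gagliardo--Nirenberg interpolation
\begin{equation*}
\|w_i\|_{L^2(\bfR^2)}^{2} \le C\|\nabla w_i\|_{L^2(\bfR^2)}\|w_i\|_{L^1(\bfR^2)},
\end{equation*}
where $\|\nabla w_i\|_{L^2}\le \|\nabla v_i\|_{L^2}$ is already controlled by the quadratic part of $I$ and $\|w_i\|_{L^1}$ is controlled by the log integrand on $\{|v_i|>M\}$. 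Combined with $v_i^2\le 2w_i^2 + 2M^2$ on $\{|v_i|>M\}$ and the finite-measure bound $|\{|v_i|>M\}|\le M^{-1}\|v_i\|_{L^1(\{|v_i|>M\})}$, this closes $\|v_i\|_{H^1(\bfR^2)}\le C$ along any minimizing sequence. Weak compactness in $H^1\times H^1$, weak lower semicontinuity of the gradient quadratic form, and Fatou's lemma applied to the (nonnegative, by convexity) recentered log integrands produce a minimizer; a standard elliptic bootstrap yields smoothness, and strict convexity of $I$ (verified exactly as in Theorem \ref{thm1}) yields uniqueness.

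For the sharp decay, since $u_i\to 0$ at infinity, $\tanh(u_i/2) = u_i/2 + O(u_i^3)$ and the system linearizes with coefficient matrix $\tfrac12(a_{ij})$. Introducing the weighted sum of squares $w = u_1^2 + (a_{12}/a_{21})u_2^2$ and using $\Delta(u_i^2)\ge 2u_i\Delta u_i$, a direct computation gives, for $|x|$ large,
\begin{equation*}
\Delta w \;\ge\; a_{11}u_1^{2} + 2a_{12}u_1u_2 + \tfrac{a_{12}a_{22}}{a_{21}}u_2^{2} + o(w) \;\ge\; (\lambda_0-\vep)w,
\end{equation*}
because the quadratic form on the right has smallest eigenvalue $\mu_-$ and $u_1^2+u_2^2\ge \min\{1, a_{21}/a_{12}\}w$, whose product equals $\lambda_0$. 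Comparison with the radial supersolution $K_0(\sqrt{\lambda_0(1-\vep)}\,|x|)$ (the modified Bessel function, asymptotically $|x|^{-1/2}\re^{-\sqrt{\lambda_0(1-\vep)}|x|}$) via the maximum principle yields \eqref{se3}, and the gradient bound \eqref{se4} then follows from standard interior $W^{2,p}$ elliptic regularity applied on unit balls.
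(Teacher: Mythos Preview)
Your proposal is correct and follows essentially the same approach as the paper: absorb the Dirac data into explicit planar backgrounds, minimize the same convex functional over $H^1(\bfR^2)\times H^1(\bfR^2)$ using the domain-splitting/truncation combined with the Gagliardo--Nirenberg inequality $\|w\|_{L^2}^2\le C\|\nabla w\|_{L^2}\|w\|_{L^1}$ to upgrade the $L^1$-control to $L^2$, and obtain decay via the weighted sum $w=u_1^2+(a_{12}/a_{21})u_2^2$ and a comparison argument. The only cosmetic differences are that the paper uses backgrounds with $|x-z|^{-4}$ rather than $|x-z|^{-2}$ (and explicitly takes the parameter large so the error $g_i$ is uniformly small, which you should also do to absorb the linear term $\int g_i v_i$ into the coercive part), derives the gradient decay \eqref{se4} by repeating the weighted-sum differential inequality for $(\partial u_1)^2+(a_{12}/a_{21})(\partial u_2)^2$ rather than via interior elliptic estimates, and computes the quantized integrals by integrating the smooth $v_i$-equations (so that $\int_{\bfR^2}\Delta v_i=0$ is immediate) rather than the singular $u_i$-equations.
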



We now proceed to prove the existence and uniqueness of a solution by the method of calculus of variations. The stated decay estimates and
quantized integrals of a solution will be established in the next section.

Let $u_{0,i}$ be given by
\begin{equation}
u_{0,i}=-\frac 12\sum_{j=1}^{N_i}\ln\big(1+\lambda|x-z'_{i,j}|^{-4}\big)+\frac 12\sum_{j=1}^{P_i}\ln \big(1+\lambda|x-z''_{i,j}|^{-4}\big),\quad i=1,2.\label{se7}
\end{equation}
By a direct computation, we have
\be\label{back1}
\Delta u_{0,i}=4\pi\sum_{j=1}^{N_i}\delta_{z'_{i,j}}-4\pi\sum_{j=1}^{P_i}\delta_{z''_{i,j}}-f_i,
\ee
where
\be
f_i\equiv8\sum_{j=1}^{N_i}\frac{\lambda|x-z'_{i,j}|^2}{(\lambda+|x-z'_{i,j}|^4)^2}-8\sum_{j=1}^{P_i}\frac{\lambda |x-z''_{i,j}|^2}{(\lambda+|x-z''_{i,j}|^4)^2},\quad i=1,2.
\ee
Note that
\begin{equation}
\frac{\lambda|x|^2}{(\lambda+|x|^4)^2}\le \begin{cases}
\displaystyle\frac{\lambda^{\frac 53}}{\lambda^2}\le \lambda^{-\frac 13},\quad \text{for}\quad |x|^3\le \lambda,\\
\displaystyle\frac{\lambda}{|x|^6}\le \lambda^{-1},\quad \text{for}\quad |x|^3\ge \lambda,
\end{cases}
\end{equation}
which  implies $f_i\rightarrow 0$ as $\lambda\rightarrow \infty$ uniformly for $i=1,2$. Fix $\lambda$ large to be determined later. Also from the above construction, we know $u_{0,i}\in L^1(\mathbb R^2)\cap L^2(\bfR^2)$, $i=1,2$.

Set $v_i=u_i-u_{0,i}$, $i=1,2$.  Then we reduce \eqref{torus1a}--\eqref{torus1b} and \eqref{se2} into
\bea
 \Delta v_1&=&a_{11}\frac{\re^{u_{0,1}+v_1}-1}{\re^{u_{0,1}+v_1}+1}+a_{12}\frac{\re^{u_{0,2}+v_2}-1}{\re^{u_{0,2}+v_2}+1}+f_1,\label{space2a}\\
 \Delta v_2&=&a_{21}\frac{\re^{u_{0,1}+v_1}-1}{\re^{u_{0,1}+v_1}+1}+a_{22}\frac{\re^{u_{0,2}+v_2}-1}{\re^{u_{0,2}+v_2}+1}+f_2,\label{space2b}
\eea
over $\mathbb R^2$
and
\be
v_i\to 0\quad    \text{as}\quad |x|\to \infty, \quad i=1, 2,
\ee
respectively.


We now consider the  functional $J(v_1,v_2)$ for $(v_1,v_2)\in H^1(\mathbb R^2)\times H^1(\mathbb R^2)$ as follows:
\ber\label{fun2}
J(v_1,v_2)&\equiv&\frac{a_{22}}2\int_{\mathbb R^2}|\nabla v_1|^2\ud x-a_{12}\int_{\mathbb R^2}\nabla v_1\cdot\nabla v_2\ud x
+\frac{a_{11}a_{12}}{2a_{21}}\int_{\mathbb R^2}|\nabla v_2|^2\ud x
\nn\\
&&
+|a|\int_{\mathbb R^2}\ln\left(\frac{\re^{u_{0,1}+v_1}+\re^{-(u_{0,1}+v_1)}+2}{4}\right)\ud x
\nn\\
&&
+\frac{a_{12}}{a_{21}}|a|\int_{\mathbb R^2}\ln\left(\frac{\re^{u_{0,2}+v_2}+\re^{-(u_{0,2}+v_2)}+2}{4}\right)\ud x\nn\\
&&+\int_{\mathbb R^2}\tilde{f}_1v_1\ud x+\frac{a_{12}}{a_{21}}\int_{\mathbb R^2}\tilde{f}_2v_2\ud x,
\eer
where and in the sequel we use the notation
\ber
 \tilde{f}_1\equiv a_{22}f_1-a_{12}f_2, \quad  \tilde{f}_2\equiv a_{11}f_2-a_{21}f_1.
\eer
Now we show that the functional  $J(v_1,v_2)$ is well defined.  We split  $\mathbb R^2$ into two parts:
\bea
\mathbb R^2&=&\left\{\re^{u_{0,i}+v_i}+\re^{-(u_{0,i}+v_i)}\le 4\right\}\cup\left\{\re^{u_{0,i}+v_i}+\re^{-(u_{0,i}+v_i)}> 4\right\}\nn\\
&\equiv& A_i\cup \mathbb (\bfR^2\backslash A_i),\quad  i=1, 2.\label{517}
\eea
Then we have
\ber
0&\le &\int_{A_i}\ln\left(\frac{\re^{u_{0,i}+v_i}+\re^{-(u_{0,i}+v_i)}+2}{4}\right)\ud x\nn\\
&=&\int_{A_i}\ln\left(1+\frac12\left[\frac{\re^{u_{0,i}+v_i}+\re^{-(u_{0,i}+v_i)}}{2}-1\right]\right)\ud x\nn\\
&\le & \int_{A_i} \frac12\left(\frac{\re^{u_{0,i}+v_i}+\re^{-(u_{0,i}+v_i)}}{2}-1\right) \ud x\nn\\
&\le & \frac12\int_{A_i} (u_{0,i}+v_i)^2\ud x\nn\\
&\le&\int_{\bfR^2} u_{0,i}^2\ud x+\int_{\mathbb R^2}v_i^2\ud x.\label{518}
\eer
On the other hand, from $2\re^{|t|}\geq \re^t+\re^{-t}$, we obtain $|u_{0,i}+v_i|> \ln2$ in $\mathbb R^2\backslash A_i$.  Thus,
 we have
\ber
0&\le &\int_{\mathbb R^2\backslash A_i}\ln\left(\frac{\re^{u_{0,i}+v_i}+\re^{-(u_{0,i}+v_i)}+2}{4}\right)\ud x\nn\\
 &\le&\int_{\mathbb R^2\backslash A_i}\ln\left(\re^{u_{0,i}+v_i}+\re^{-(u_{0,i}+v_i)}\right)\ud x\nn\\
 &\le&\int_{\bfR^2\backslash A_i}\ln \left(2\re^{|u_{0,i}+v_i|}\right)\ud x\nn\\
&\le&\int_{\bfR^2\backslash A_i}\ln\left(\re^{2\ln|u_{0,i}+v_i|}\right)\,\ud x\nn\\
 &=&\int_{\bfR^2\backslash A_i} (u_{0,i}+v_i)^2\ud x\nn\\
&\le& 2\int_{\bfR^2} u_{0,i}^2\ud x+2\int_{\mathbb R^2}v_i^2\ud x.\label{519}
 \eer
The bounds (\ref{518})--(\ref{519}) establish that
 the functional  $J$ is well defined. In fact, in line of these estimates, it becomes clear to show that $J$ is $C^1$
over $H^1(\bfR^2)\times H^1(\bfR^2)$. Thus, as in the compact case,  it is seen that the equations
\eqref{space2a}--\eqref{space2b} are the Euler--Lagrange equations of the functional $J$ given in (\ref{fun2}).
Note also that the functional $J$ is strictly convex. So $J$ has at most one critical point in $H^1(\bfR^2)\times H^1(\bfR^2)$.
In particular the uniqueness of a solution of \eqref{space2a}--\eqref{space2b} in $H^1(\bfR^2)\times H^1(\bfR^2)$
follows.

Next, we aim to get some lower control of the functional.   For convenience, we modify the splitting (\ref{517}) slightly:
\bea
\mathbb R^2&=&\left\{\re^{u_{0,i}+v_i}+\re^{-(u_{0,i}+v_i)}\le 5\right\}\cup\left\{\re^{u_{0,i}+v_i}+\re^{-(u_{0,i}+v_i)}> 5\right\}\nn\\
&\equiv& A_i\cup \mathbb (\bfR^2\backslash A_i),\quad  i=1, 2.
\eea
On $A_i$, using Taylor's truncation,  we  have
\ber
&&\int_{A_i}\ln\left(\frac{\re^{u_{0,i}+v_i}+\re^{-(u_{0,i}+v_i)}+2}{4}\right)\ud x\nn\\
&&=\int_{A_i}\ln\left(1+\frac12\left[\frac{\re^{u_{0,i}+v_i}+\re^{-(u_{0,i}+v_i)}}{2}-1\right]\right)\ud x\nn\\
&&\ge\frac12\int_{A_i}\left(\left(\cosh(u_{0,i}+v_i)-1\right)-\frac14\left(\cosh(u_{0,i}+v_i)-1\right)^2\right) \ud x\nn\\
&&\ge\frac5{16}\int_{A_i} \left(\cosh(u_{0,i}+v_i)-1\right)\,\ud x\nn\\
&&\ge \frac5{32}\int_{A_i}(u_{0,i}+v_i)^2\ud x\nn\\
&&\ge \frac5{64}\int_{A_i}v_i^2\ud x-\frac5{32}\int_{\bfR^2} u^2_{0,i}\,\ud x\, \quad i=1, 2.\label{522}
\eer
On the other hand, on $\mathbb R^2\backslash A_i$, we have
\begin{equation}
(\re^{u_{0,i}+v_i}+\re^{-(u_{0,i}+v_i)})^{\theta}\ge 4,\quad \text{for some }\theta\in(0,1),\quad  i=1,2.
\end{equation}
Then we  see  that
\ber
&&\int_{\mathbb R^2\backslash A_i}\ln\left(\frac{\re^{u_{0,i}+v_i}+\re^{-(u_{0,i}+v_i)}+2}{4}\right)\ud x\nn\\
&&\ge(1-\theta)\int_{\mathbb R^2\backslash A_i}\ln\left(\re^{u_{0,i}+v_i}+\re^{-(u_{0,i}+v_i)}\right)\ud x\nn\\
&&\ge (1-\theta)\int_{\mathbb R^2\backslash A_i}\ln \re^{|u_{0,i}+v_i|}\ud x\nn\\
&&\ge C_0\int_{\mathbb R^2\backslash A_i}|v_i|\ud x-{C_1(\lambda)},\label{524}
\eer
where $C_0,C_1(\lambda)$ are some positive constants.

Furthermore, for the last two terms in \eqref{fun2}, we  have
\be\label{525}
 \int_{\mathbb R^2}\tilde{f}_iv_i\ud x
 \ge-\vep\int_{A_i}|v_i|^2\ud x-\frac{1}{4\vep}\int_{A_i}|\tilde{f}_i|^2\ud x-\int_{\mathbb R^2\backslash A_i}|\tilde{f}_iv_i|\ud x,
\ee
where $\vep>0$ is a constant to be determined later.

Combining (\ref{522}), (\ref{524}), and (\ref{525}), and
noting the fact that $f_i\in L^p(\mathbb{R}^2)$ for any $p\ge1$ and  $f_i\rightarrow 0$ as $\lambda\rightarrow \infty$ uniformly for $i=1,2$,
and that  $\vep$ in (\ref{525}) may be chosen to be small, we arrive at
\begin{equation}\label{coe1}
J(v_1,v_2)\ge C_1\sum_{i=1}^2\left(\int_{\mathbb R^2}|\nabla v_i|^2 \ud x+\int_{A_i}v_i^2\ud x+\int_{\mathbb R^2\backslash A_i}|v_i|\ud x\right)-{C_2(\lambda)},
\end{equation}
where $C_1,C_2(\lambda)$ are positive constants independent of $v_1,v_2$. In particular, we see that
the functional $J$ is bounded from below in the space $H^1(\mathbb R^2)\times H^1(\mathbb R^2)$
such that the minimization problem
 \be
\displaystyle \eta_0\equiv\inf\{J(v_1,v_2)\,|\, (v_1,v_2)\in H^1(\mathbb R^2)\times H^1(\mathbb R^2)\}
\ee
 is well-defined, as in the compact surface situation.

   Consider a minimizing sequence $\{(v_{1n},v_{2n})\}\subset C_c^\infty(\mathbb R^2)$. We need to show  $\{(v_{1n},v_{2n})\}$ is a bounded sequence in $H^1(\mathbb R^2)\times H^1(\mathbb R^2)$.
  In  what follows we use $v_{n}$ to denote $v_{in}$ and $A_n$ to denote the corresponding $A_{in}$ ($i=1,2$) defined in
(\ref{coe1}) associated with the pair $(v_{1n},v_{2n})$. From \eqref{coe1}, we have the bound
   \begin{equation}
   \|\nabla v_{n}\|_{L^2(\mathbb R^2)}+\int_{A_{n}}v_{n}^2\ud x+\int_{\mathbb R^2\backslash A_{n}}|v_{n}|\ud x\le C,
   \end{equation}
for some constant $C>0$.
Now, for any fixed $\delta>0$, we have
 \ber
\int_{\{|v_{n}|\ge\delta\}} |v_{n}| \ud x&=&\int_{A_{n}\cap \{|v_{n}|\ge \delta\}}|v_{n}|\ud x+\int_{(\mathbb R^2\backslash A_{n})\cap\{|v_{n}|\ge \delta\}}|v_{n}|\ud x\nn\\
&\le& \frac 1{\delta}\int_{A_{n}\cap \{|v_{n}|\ge \delta\}}v_{n}^2\ud x+\int_{(\mathbb R^2\backslash A_{n})\cap\{|v_{n}|\ge \delta\}}|v_{n}|\ud x\nn\\
&\le&{C}\max\left\{\frac1\delta,1\right\},\label{sp1}
\eer
and
\ber
\int_{\{|v_{n}|\le \delta\}} v_{n}^2 \ud x&=&\int_{A_{n}\cap \{|v_{n}|\le \delta\}}v_{n}^2\ud x+\int_{(\mathbb R^2\backslash A_{n})\cap\{|v_{n}|\le \delta\}}v_{n}^2\ud x\nn\\
&\le& \int_{A_{n}\cap \{|v_{n}|\le \delta\}}v_{n}^2\ud x+\delta \int_{(\mathbb R^2\backslash A_{n})\cap\{|v_{n}|\le \delta\}}|v_{n}|\ud x\nn\\
&\le&C\max\{\delta,1\}.\label{sp2}
\eer

 Now consider the open set $\{v_{n}>\delta\}$. Without loss of generality, we assume $\delta=1$. 
Let
\ber
B_{n}\equiv\{v_{n}>1\}.
\eer
 By \eqref{sp1}, we know $|B_{n}|\le C$. Also since $v_{n}$ is compactly supported,  $B_{n}$ is a bounded open set.
Define
\begin{equation}
\tilde v_{n}\equiv \max\{v_{n}-1,0\}.
\end{equation}
Then, from the above analysis, we have
\be\label{533}
\int_{\mathbb R^2} |\nabla \tilde v_{n}|^2\ud x+\int_{\mathbb R^2} |\tilde v_{n}|\ud x=\int_{B_{n}} |\nabla v_{n}|^2\ud x+\int_{B_{n}} |v_{n}-1|\ud x\le 2C.
\ee

Recall a special case of the Gagliardo--Nirenberg interpolation inequality (cf. Theorem 12.83 in \cite{Leoni}):
\be
\|f\|_{L^r(\mathbb R^2)}\le C(\alpha)\|\nabla f\|_{L^2(\mathbb{R}^2)}^{\alpha}\|f\|_{L^1(\mathbb{R}^2)}^{1-\alpha}, \quad r=\frac{1}{1-\alpha},\quad \forall \alpha\in(0, 1). \label{gni}
\ee
Taking $r=2$ in \eqref{gni} and applying it to $\tilde{v}_n$,  we see that, in view of (\ref{533}), there holds the bound
\begin{equation}
\int_{\mathbb R^2}\tilde v_{n}^2\ud x\le C^2\left(\frac12\right)\|\nabla \tilde v_{n}\|_{L^2(\mathbb{R}^2)}\|\tilde v_{n}\|_{L^1(\mathbb{R}^2)}\leq C_0,
\end{equation}
where $C_0>0$ is a constant independent of $n$. This in turn implies
\begin{equation}\label{sp3}
\int_{B_{n}} v_{n}^2\ud x=\int_{B_{n}}(\tilde v_{n}+1)^2\ud x\le 2\int_{B_{n}}\tilde v_{n}^2\ud x+2|B_{n}|\le C_1,
\end{equation}
where $C_1>0$ is a constant independent of $n$.

Likewise, for  $\tilde B_{n}\equiv\{v_{n}<-1\}$ (say),  there also holds
\begin{equation}\label{sp4}
\int_{\tilde B_{n}} v_{n}^2\ud x\le C_2,
\end{equation}
with an $n$-independent positive constant $C_2$.

From the estimates \eqref{sp2}, \eqref{sp3}, and \eqref{sp4},  we see that  the sequence $\{v_{n}\}$ is bounded in $H^1(\mathbb R^n)$. That is, the minimizing sequence $\{(v_{1n},v_{2n})\}$ is bounded in $H^1(\bfR^2)\times H^1(\bfR^2)$.

Then, taking a subsequence if necessary,  we may assume that  $ v_{in}\rightarrow v_i$ weakly in { $H^1(\mathbb{R}^2)$}, and $v_{in}\rightarrow  v_i$ a.e. in $\mathbb R^2$ for some $ v_i\in H^1(\mathbb{R}^2)$, $i=1,2$.
 Noting that the functional $J$ is $C^1$ in $H^1(\bfR^2)\times H^1(\bfR^2)$  and weakly lower semicontinuous, which is ensured by its 
 convexity, we see that  $J(v_1,v_2)= \eta_0$.
 As a (unique) critical point of $J$ in $H^1(\bfR^2)\times H^1(\bfR^2)$, $(v_1,v_2)$ is a weak solution
 of \eqref{space2a}--\eqref{space2b}.  

 We may check that the right-hand sides of the system of equations \eqref{space2a}--\eqref{space2b}
belong to $L^2(\mathbb{R}^2)$. Then by the standard elliptic
$L^2$-estimates we have $v_i\in H^2(\mathbb{R}^2)$, which implies $v_i\to 0$ as $|x|\to \infty$, $i=1,2$.
Furthermore, a bootstrap argument shows that $(v_1,v_2)$ is a smooth solution of \eqref{space2a}--\eqref{space2b}.

\section{Planar solution:  exponential decay properties and quantized integrals}

Now we first establish the decay estimates of the planar solution at infinity.

Let $R_0$ satisfy
\be
R_0>\max\left\{\max\limits_{i=1,2;1\le j\le N_i}|z'_{i,j}|, \max\limits_{i=1,2;1\le j \le P_i}|z''_{i,j}|\right\}
\ee
and $D_R$ denote a disk in $\bfR^2$ centered at the origin with radius $R>0$.
Then, outside $D_{R_0}$, we may conveniently rewrite \eqref{torus1a}--\eqref{torus1b} as
\be
 \Delta \begin{pmatrix}u_1\\\frac{a_{12}}{a_{21}}u_2\end{pmatrix}=\tilde{A}\begin{pmatrix}\frac{\re^{u_1}-1}{\re^{u_1}+1}\\\frac{\re^{u_2}-1}{\re^{u_2}+1}\end{pmatrix},
\ee
where
\ber
 \tilde{A}\equiv \begin{pmatrix} a_{11}&a_{12}\\a_{12}&\frac{a_{12}}{a_{21}}a_{22}\end{pmatrix} \label{e49}
\eer
is a positive definite matrix whose smaller eigenvalue is
\be
\lambda_1\equiv\frac12\left\{a_{11}+\frac{a_{12}}{a_{21}}a_{22}-\sqrt{\Big(a_{11}+\frac{a_{12}}{a_{21}}a_{22}\Big)^2-4\frac{a_{12}}{a_{21}}|a|}\right\}>0.
\ee
Let $w\equiv u_1^2+\frac{a_{12}}{a_{21}}u_2^2$.  Noting that $u_1, u_2$ vanish at infinity,  after a direct computation, we have
\ber
 \Delta w&\ge& 2\left(u_1\Delta u_1+\frac{a_{12}}{a_{21}}u_2\Delta u_2\right)\nn\\
 &=&2(u_1, u_2)\tilde{A}\begin{pmatrix}\frac{\re^{u_1}-1}{\re^{u_1}+1}\\\frac{\re^{u_2}-1}{\re^{u_2}+1}\end{pmatrix}\nn\\
 &=&(u_1, u_2)\tilde{A}\begin{pmatrix}u_1\\u_2\end{pmatrix}-2(u_1, u_2)\tilde{A}\begin{pmatrix}\frac12u_1-\frac{\re^{u_1}-1}{\re^{u_1}+1}\\\frac12{u_2}-\frac{\re^{u_2}-1}{\re^{u_2}+1}\end{pmatrix}\nn\\\nn\\
 &\ge&\lambda_1(u_1^2+u_2^2)-g(x)(u_1^2+u_2^2)
\eer
where  $g(x)$ is a function vanishing at infinity.
Then, for any $\vep\in(0,1)$, there exists an $R_\vep>0$ such that
 \ber
 \Delta w\ge\lambda_1\left(1-\frac\vep2\right)(u_1^2+u_2^2)\ge \lambda_0\left(1-\frac\vep2\right)w \quad \text{as} \quad |x|>R_\vep,\label{e50}
 \eer
 where
 \be
 \lambda_0\equiv\lambda_1\min\Big\{1, \frac{a_{21}}{a_{12}}\Big\}\label{lm0}
 \ee
Hence by  \eqref{e50} and a comparison function argument we infer that,  for any $\vep\in(0, 1)$, there exists a constant $C(\vep)>0$
such that
\ber
 w&\le&C(\vep)\re^{-\sqrt{\lambda_0}(1-\vep)|x|}\quad  \text{as} \quad x>R_\vep.\label{e50a}
\eer

Next, let $\partial$ denote any of the two derivatives, $\partial_1$ and $\partial_2$.
Thus, when  $|x|>R_0$, we have
\ber
 \Delta \begin{pmatrix}\partial u_1\\\partial u_2\end{pmatrix}=2A\begin{pmatrix}\frac{\re^{u_1}\partial {u_1}}{(\re^{u_1}+1)^2}\\\frac{\re^{u_2}\partial {u_2}}{(\re^{u_2}+1)^2}\end{pmatrix}.\label{e51}
\eer
 Noting  that the right-hand side of \eqref{e51} belongs to $L^2(\mathbb{R}^2\setminus D_{R_0})$ and using the elliptic  $L^2$-estimate there,  we obtain
 $\partial u_1, \partial u_2\in W^{2,2}(\mathbb{R}^2\setminus D_{R_0})$, which implies in particular that  $\partial u_1$ and $\partial u_2$ vanish at infinity.

Rewrite \eqref{e51} again as before in the familiar form:
 \ber
 \Delta \begin{pmatrix}\partial u_1\\ \frac{a_{12}}{a_{21}}\partial u_2\end{pmatrix}=2\tilde{A}\begin{pmatrix}\frac{\re^{u_1}\partial u_1}{(\re^{u_1}+1)^2}\\\frac{\re^{u_2}\partial u_2}{(\re^{u_2}+1)^2}\end{pmatrix}, \quad x>R_0,\label{e52}
\eer
where $\tilde{A}$ is given in \eqref{e49}.
Set $W\equiv(\partial u_1)^2+\frac{a_{12}}{a_{21}}(\partial u_2)^2$. Then   we have
 \ber
 \Delta W &\ge& 2\left(\partial u_1\Delta\partial u_1+\frac{a_{12}}{a_{21}}\partial u_2\Delta\partial u_2\right)\nn\\
  &=&4(\partial u_1, \partial u_2)\tilde{A}\begin{pmatrix}\frac{\re^u\partial u_1}{(\re^{u_1}+1)^2}\\\frac{\re^{u_2}\partial {u_2}}{(\re^{u_2}+1)^2}\end{pmatrix}\nn\\
  &=& (\partial u_1, \partial u_2)\tilde{A}\begin{pmatrix}\partial u_1\\\partial u_2\end{pmatrix}  -4(\partial u_1, \partial u_2)A\begin{pmatrix}\left(\frac14-\frac{\re^{u_2}}{(\re^{u_1}+1)^2}\right)\partial u_1\\\left(\frac14-\frac{\re^{u_2}}{(\re^{u_2}+1)^2}\right)\partial u_2\end{pmatrix}\nn\\
  &\ge&\lambda_1((\partial u_1)^2+(\partial u_2)^2)-h(x)((\partial u_1)^2+(\partial u_2)^2), \quad x>R_0,
 \eer
where $h(x)$ is a function vanishing at infinity.
Thus, similar as in getting \eqref{e50a}, we  deduce that, for any $\vep\in(0, 1)$, there exists a constant $C(\vep)>0$
such that
\ber
 W&\le&C(\vep)\re^{-\sqrt{\lambda_0}(1-\vep)|x|}\quad  \text{as} \quad x>R_\vep, \label{e53}
\eer
where $\lambda_0$ is defined as in \eqref{lm0}. Therefore, the desired estimates \eqref{se3}--\eqref{se4} follows from \eqref{e50a} and \eqref{e53}.

Finally we derive the quantized integrals. To proceed, we note that, in view of the properties of $\nabla u_i$ stated in \eqref{se4} and of $u_{0,i}$
given in \eqref{se7} ($i=1,2$),  we have
${|\nabla v_i|=O(|x|^{-5})}$ (say) as $|x|\to \infty$, $i=1,2$.  As a consequence, the divergence theorem then leads to
\be\label{613}
\int_{\bfR^2}\Delta v_i\,\ud x=\lim_{R\to\infty}\int_{D_R}\Delta v_i\,\ud x=0,\quad i=1,2.
\ee
Besides, a direct integration gives us
\ber\label{614}
 \itr f_i\,\ud x=4\pi(P_i-N_{i}), \quad i=1,2.
\eer
 Combining (\ref{613}) and (\ref{614}), we obtain
 \ber
  a_{11}\itr\frac{\re^{u_1}-1}{\re^{u_1}+1}\ud x+a_{12}\itr\frac{\re^{u_2}-1}{\re^{u_2}+1}\ud x=4\pi(P_1-N_1),\\
 a_{21}\itr\frac{\re^{u_1}-1}{\re^{u_1}+1}\ud x+a_{22}\itr\frac{\re^{u_2}-1}{\re^{u_2}+1}\ud x=4\pi(P_2-N_2),
 \eer
from which the anticipated quantized integrals \eqref{se5}--\eqref{se6} then follow.

\section{Summary and comments}

Extended quantum field theory models hosting multiple sectors of the Higgs fields are of wide range of applications including
superconductivity, elementary particles, condensed-matter physics, and cosmology. In these applications, vortices, or vortexlines, often provide useful conceptual constructs and mechanisms for interactions at fundamental levels. Thus, realization
and uncovery of vortices of novel features are of value. In this work, we developed a gauge field theory allowing the
coexistence of vortices and antivortices and established a series of sharp existence and uniqueness theorems for the solutions of the governing equations.

\begin{enumerate}
\item[(i)] Based on the gauged harmonic map model and the product Abelian Higgs theory hosting impurities, an
extended and dually coupled
gauged harmonic map model is presented in which two species of vortices and antivortices coexist and are governed by
 vortex equations of a BPS type. Topologically, the solutions are characterized by two classification classes, namely, the first
Chern class of the defining line bundle and the Thom class of the associated dual bundle. Mathematically, the vortices and
antivortices of solutions are given by the zeros and poles of the cross-sections where two induced magnetic fields represented
by bundle curvatures attain their peaks and valleys.

\item[(ii)] For the vortex equations over a compact surface modeling a doubly periodic lattice structure, an existence and
uniqueness theorem for a solution realizing an {arbitrarily} given prescribed distribution of vortices and antivortices is proved
under a necessary and sufficient condition relating the total numbers of vortices and antivortices and the coupling parameters
involved. This condition is independent of the locations of the vortices but gives an explicit upper bound for the differences
of the numbers of vortices and antivortices in terms of the total area of the hosting surface.

\item[(iii)] For the vortex equations over the full plane, an existence and uniqueness theorem for coexisting vortex and antivortex
solutions is also proved for arbitrary coupling parameters and vortex numbers. The solutions describe spontaneously broken
vacuum symmetry at spatial infinity and are energetically localized configurations. Sharp exponential decay estimates of the solutions are obtained as well.

\item[(iv)] The magnetic fluxes and energies of the solutions over a compact surface and on the full plane are all quantized and
expressed in the terms of total numbers of vortices and antivortices. Specifically, the magnetic fluxes
are determined by the differences of numbers of vortices and antivortices, suggesting the fact that magnetically these vortices
annihilate each other, and the energy on the other hand is given in terms of the sum of the total numbers of all vortices, indicating the fact that
energetically these vortices make equal or indistinguishable contributions as field solitons.
\end{enumerate}

This work opens some future directions to be explored further. For example, it will be interesting to investigate the solutions
realizing an unbroken vacuum symmetry at infinity characterized by the boundary condition $\phi=\psi={\bf n}$ in (\ref{jH0}) or
$q=p=0$ in a slightly modified version of the system of equations \eqref{16}--\eqref{19} at infinity. It will also be interesting to
study the problem of coexisting cosmic strings and antistrings when the model is coupled with the Einstein gravity, especially
the issue how these vortices give rise to localized curvature distribution and how they determine the deficit angle
and geodesic completeness of the induced
gravitational metric at infinity.

In a broader context, this work belongs to the study of coexisting field-theoretical solitons carrying opposite soliton charges, among which one of the most interesting
applications is to use a monopole and antimonopole pair to model a quark and antiquark pair in interaction,  to probe the linear confinement mechanism of quarks, as briefly reviewed in Introduction. However, at the
governing equation levels, there has been no successful construction of solutions realizing a monopole and antimonopole pair, in three-spatial dimensional settings. 
Our work here, on the other hand, is a construction of vortices and antivortices, either paired or unpaired, realizing opposite magnetic charges, in both compact and
noncompact situations, in two-spatial dimensional settings. Hopefully, this lower-dimensional construction will offer useful insight for the investigation in higher-dimensional settings.

\medskip

\medskip

{\bf Acknowledgments.}
 Han was supported by  National Natural Science Foundation of China under Grant 11671120 and  HASTIT (18HASTIT028). Huang was supported by  National Natural Science Foundation of China under Grant 11871160. Yang was partially supported by  National Natural Science Foundation of China under Grant 11471100.


\begin{thebibliography}{99}



\bibitem{A}
A. A. Abrikosov,  {On the magnetic properties of superconductors of
the second group,} {\em Sov. Phys. JETP} {\bf5} (1957) 1174--1182.

\bibitem{Adam}
C. Adam, J. M. Speight, and A. Wereszczynski, Volume of a vortex and the Bradlow bound, {\em Phys. Rev.} D {\bf95} (2017) 116007.

\bibitem{Aubin}
T. Aubin, {\em Nonlinear Analysis on Manifolds: Monge--Ampere Equations},
Springer, Berlin and New York, 1982.

\bibitem{Auzzi}
R. Auzzi, S. Bolognesi, J. Evslin, K. Konishi, and A. Yung,
 {Nonabelian superconductors: vortices and confinement in ${\cal N}=2$ SQCD},
{\em Nucl. Phys.} B {\bf673} (2003) 187--216.

\bibitem{Babaev}
E. Babaev, Vortices with fractional flux in two-gap superconductors and in extended Faddeev model, {\em Phys. Rev. Lett.} {\bf89} (2002) 067001.

\bibitem{BP}
A. A. Belavin and A. M. Polyakov, Metastable states of two-dimensional
isotropic ferromagnets, {\em JETP Lett.} {\bf22} (1975) 245--247.


\bibitem{B}
E. B. Bogomol'nyi,  {The stability of classical solutions},  {\em Sov. J. Nucl. Phys.} {\bf24}
(1976) 449--454.

\bibitem{Br}
S. Bradlow, Vortices in holomorphic line bundles over closed K\"{a}hler manifolds, {\em Commun.
Math. Phys.} {\bf135} (1990) 1--17.

\bibitem{C}
M. Cipriani, D. Dorigoni, S. B. Gudnason, K. Konishi,  and A. Michelini,
Non-Abelian monopole-vortex complex, {\em Phys. Rev.} D {\bf82} (2010) 125028.

\bibitem{D}
J. Dziarmaga, Vortices and hierarchy of states in double-layer fractional Hall effect, arxiv: cond-mat/9407085.

\bibitem{ES}
J. Eells and J. Lemaire,  A report on harmonic maps, {\em  Bull. London Math. Soc.} {\bf10} (1978) 1--68.

\bibitem{EF}
M. Eto, T. Fujimori, S. B. Gudnason, K. Konishi, M. Nitta, K. Ohashi, and W. Vinci,  {
Constructing non-Abelian vortices with arbitrary gauge groups},
{\em Phys. Lett.} B {\bf 669} (2008) 98--101.

\bibitem{EFN}
 M. Eto, T. Fujimori, T. Nagashima, M. Nitta, K. Ohashi, and N. Sakai,  {Multiple layer structure of non-Abelian vortex},
{\em Phys. Lett.} B {\bf678} (2009) 254--258.


\bibitem{Eto-survey}
M. Eto, Y. Isozumi, M. Nitta, K. Ohashi, and N. Sakai,
 {Solitons in the Higgs phase -- the moduli matrix approach},
	{\em J. Phys.} A {\bf39} (2006) R315--R392.


\bibitem{EI}
M. Eto, Y. Isozumi, M. Nitta, K. Ohashi, and N. Sakai,
 {Moduli space of non-Abelian vortices},
{\em Phys. Rev. Lett.} {\bf96} (2006) 161601.

\bibitem{F}
L. Fontana, Sharp borderline Sobolev inequalities on compact Riemannian manifolds, {\em Comment. Math. Helv.} {\bf68}
(1993) 415--454.

\bibitem{GL}
V. L. Ginzburg and L. D. Landau, On the theory of superconductivity, in {\em Collected Papers
of L. D. Landau} (edited by D. Ter Haar), pp. 546--568, Pergamon, New York, 1965.

\bibitem{Gr}
J. Greensite,
{\em An Introduction to the Confinement Problem},
Lecture Notes in Physics
{\bf821}, Springer-Verlag, Berlin and New York, 2011.




\bibitem{GJK}
S. B. Gudnason, Y. Jiang, and K. Konishi,
 {Non-Abelian vortex dynamics: effective world-sheet action,}
 {\em J. High Energy Phys.} {\bf012} (2010) 1008.

\bibitem{HY}
X. Han and Y. Yang,
Topologically stratified energy minimizers in a product Abelian field theory, {\em Nucl. Phys.} B {\bf898} (2015) 605--626.

\bibitem{HY2}
X. Han and Y. Yang, Magnetic impurity inspired Abelian Higgs vortices, {\em J. High Energy Phys.} {\bf2} (2016) 046.



\bibitem{HT}
A. Hanany and D. Tong,
 {Vortices, instantons and branes,}
{\em J. High Energy Phys.} {\bf0307} (2003) 037.

\bibitem{HT2004}
A. Hanany and D. Tong,
 {Vortex strings and four-dimensional gauge dynamics,}
{\em J. High Energy Phys.} {\bf0404} (2004) 066.

\bibitem{HKT}
A. Hook, S. Kachru, and G. Torroba,
{Supersymmetric defect models and mirror symmetry}, \emph{J. High Energy Phys.}
{\bf11} (2013) 004.

\bibitem{IS}
I. Ichinose and A. Sekiguchi, Topological solitons in Chern--Simons theories for double-layer fractional quantum Hall effect, {\em Nucl. Phys.} B {\bf493} (1997)
683--706.


\bibitem{JT}
A. Jaffe and C. H. Taubes, {\em Vortices and Monopoles}, Birkh\"{a}user, Boston, 1980.

\bibitem{K1}
T. W. B. Kibble, Some implications of a cosmological phase transition, {\em Phys.
Rep.} {\bf69} (1980) 183--199.

\bibitem{K2}
T. W. B. Kibble, Cosmic strings -- an overview, in {\em The Formation and
Evolution of Cosmic Strings}, ed. G. Gibbons, S. Hawking, and T. Vachaspati,
Cambridge U. Press, Cambridge, pp. 3--34, 1990.


\bibitem{Kon-survey}
K. Konishi,
 {Advent of non-Abelian vortices and monopoles -- further thoughts about duality and confinement,}
 {\em Prog. Theor. Phys. Suppl.} {\bf177} (2009) 83--98.

 \bibitem{Leoni} G. Leoni,   {\it A First Course in Sobolev Spaces}, 2nd ed., Graduate Studies in Mathematics {\bf181}, Amer. Math. Soc., Providence, 2017.

\bibitem{L}
F. London and H. London, The electromagnetic equations of the superconductor, {\em Proc. Roy. Soc.} A {\bf149} (1935) 71--88.

\bibitem{Man1}
S. Mandelstam, {Vortices and quark confinement in non-Abelian gauge theories,
 } {\em Phys. Lett.} B
{\bf53} (1975) 476--478.

\bibitem{Man2}
S. Mandelstam,  {General introduction to confinement,
} {\em Phys. Rep.} C {\bf67} (1980) 109--121.

\bibitem{MN}
N. S. Manton and S. M. Nasir, Volume of vortex moduli spaces, {\em Commun. Math. Phys.} {\bf199} (1999) 591--604.

\bibitem{MY}
A. Marshakov and A. Yung,
 {Non-Abelian confinement via Abelian flux tubes in softly broken ${\cal N}=2$ SUSY QCD,
}  {\em  Nucl. Phys.} B {\bf647} (2002) 3--48.

\bibitem{Nambu}
Y. Nambu,  {Strings, monopoles, and gauge fields,} {\em Phys. Rev.} D {\bf10} (1974) 4262--4268.

\bibitem{Nasir}
S. M. Nasir, Vortices and flat connections, {\em Phys. Lett.} B {\bf419} (1998) 253--257.

\bibitem{NO}
H. Nielsen and P. Olesen, Vortex-line models for dual strings, {\em Nucl. Phys.} B
{\bf61} (1973) 45--61.

\bibitem{N1}
M. Noguchi,  Abelian Higgs theory on Riemann surfaces,
Thesis, Duke University, 1985.

\bibitem{N2}
M. Noguchi, Yang--Mills--Higgs theory on a compact Riemann surface, {\em J. Math. Phys.} {\bf28} (1987) 2343--2346.


\bibitem{PS}
M. K. Prasad and C. M. Sommerfield,  {Exact classical solutions for the 't Hooft
monopole and the Julia--Zee dyon},  {\em Phys. Rev. Lett.} {\bf35} (1975) 760--762.

\bibitem{R}
R. Rajaraman, {\em Solitons and Instantons}, North Holland, Amsterdam, 1982.

\bibitem{RS}
C. Rebbi and G. Soliani (eds.), {\em Solitons and Particles}, World Scientific,
Singapore, 1984.

\bibitem{Sch1}
B. J. Schroers, Bogomol'nyi solitons in a gauged $O(3)$ sigma model, {\em Phys. Lett.} B
{\bf356} (1995) 291--296.

\bibitem{Sch2}
B. J. Schroers, The spectrum of Bogomol'nyi solitons in gauged linear sigma models, {\em Nucl.
Phys.} B {\bf475} (1996) 440--468.

\bibitem{SW}
N. Seiberg and E. Witten,
 {Monopole condensation, and confinement in N=2 supersymmetric Yang--Mills theory,} {\em
 Nucl. Phys.} B {\bf426} (1994) 19--52. Erratum -- {\em ibid.} B {\bf430} (1994) 485--486.


\bibitem{Shifman-survey}
M. Shifman and M. Unsal,
 {Confinement in Yang--Mills: elements of a big picture},
{\em Nucl. Phys. Proc. Suppl.} {\bf186} (2009) 235--242.

\bibitem{ShY2004}
M. Shifman and A. Yung,
 {Non-Abelian string junctions as confined monopoles},
{\em Phys. Rev.} D {\bf70} (2004) 045004.

\bibitem{ShY-vortex}
M. Shifman and A. Yung,
 {Localization of non-Abelian gauge fields on domain walls at weak coupling: D-brane prototypes},
{\em Phys. Rev.} D {\bf70} (2004) 025013.

\bibitem{ShY2}
M. Shifman and A. Yung,
 {Supersymmetric solitons and how they help us understand non-Abelian gauge theories},
{\em Rev. Mod. Phys.} {\bf79} (2007) 1139.

\bibitem{ShY}
M. Shifman and A. Yung,
{\em Supersymmetric Solitons}, Cambridge U. Press, Cambridge, 2009.

\bibitem{SSY}
L. Sibner, R. Sibner, and Y. Yang,
Abelian gauge theory on Riemann surfaces and new topological invariants,  {\em Proc. Roy. Soc. London} A {\bf456} (2000) 593--613.

\bibitem{Ta}
G. Tallarita and A. Peterson,
A simple model for a dual non-Abelian monopole-vortex complex,
{\em Phys. Rev.} D {\bf96} (2017) 116017.

\bibitem{T1}
C. H. Taubes, Arbitrary $N$-vortex solutions to the first order Ginzburg--Landau equations, {\em Commun. Math.
Phys.} {\bf72} (1980) 277--292.

\bibitem{T2}
C. H. Taubes,  On the equivalence of the first and second order equations for gauge theories,
{\em Commun. Math. Phys.} {\bf75} (1980) 207--227.


\bibitem{tH}
G. 't Hooft,  {Magnetic monopoles in unified gauge theories}, {\em Nucl. Phys.} B {\bf79} (1974) 276--284.

\bibitem{tH1}
G. 't Hooft,  {On the phase transition towards permanent quark confinement}, {\em Nucl. Phys.} B {\bf138} (1978) 1--25.

\bibitem{T}
M. Tinkham, {\em Introduction to Superconductivity}, McGraw-Hill, New York, 1996.

\bibitem{Tong}
D. Tong,  {TASI lectures on solitons: instantons, monopoles, vortices and kinks}, arXiv:hep-th/0509216.
 {Quantum vortex strings: a review}, {\em Annals Phys.} {\bf324} (2009)
30--52.

\bibitem{TW}
D. Tong and K. Wong, Vortices and impurities, {\em J. High Energy Phys.} {\bf1401} (2014) 090.

\bibitem{V}
A. Vilenkin, Cosmic strings and domain walls, {\em Phys. Rep.}
{\bf121} (1985) 263--315.

\bibitem{VS}
A. Vilenkin and E. P. S. Shellard, {\em Cosmic Strings and Other Topological Defects},
Cambridge U. Press, Cambridge, 1994.

\bibitem{WY}
S. Wang and Y. Yang, Abrikosov's vortices in the critical coupling, {\em SIAM J. Math. Anal.} {\bf23} (1992) 1125--1140.

\bibitem{W}
E. Witten, Superconducting strings, {\em Nucl. Phys.} B {\bf249} (1985) 557--592.

\bibitem{Y1}
Y. Yang, Coexistence of vortices and antivortices in an Abelian gauge theory,
{\em Phys. Rev. Lett.} {\bf80} (1998) 26--29.

\bibitem{Y2}
Y. Yang, Strings of opposite magnetic charges in a gauge field theory, {\em Proc. Roy. Soc. London} 
A {\bf455} (1999) 601--629.


\end{thebibliography}
\end{document}